\newcolumntype{M}[1]{>{\centering\arraybackslash}m{#1}}
\newcolumntype{N}{@{}m{0pt}@{}}
\newcommand\Tstrut{\rule{0pt}{5ex}}       % "top" strut
\newcommand\Bstrut{\rule[-0.9ex]{0pt}{0pt}} % "bottom" strut
\newcommand{\TBstrut}{\Tstrut\Bstrut} % top&bottom struts
\tikzstyle{block} = [rectangle, draw, 
\tikzstyle{rec} = [rectangle, draw]
\tikzstyle{line} = [draw, -latex]
\tikzset{snake arrow/.style=
{
decorate,
decoration={snake,amplitude=.4mm,segment length=2mm,post length=1mm}}
}
\newtheorem{lemm}{Lemma}[section]
\newtheorem{theo}{Theorem}[section]
\newtheorem{prop}{Proposition}[section]
\newtheorem{prot}{Protocol}[section]
\newtheorem{coro}{Corollary}[section]
\newtheorem{remark}{Remark}[section]
\DeclareMathOperator*{\argmin}{\arg\!\min}
\DeclareMathOperator*{\Tr}{\mathrm{Tr}}
\DeclareMathOperator*{\rank}{\mathrm{rank}}
\DeclareMathOperator{\cH}{{\mathcal{A}}}
\DeclarePairedDelimiterX\bkk[2]{\langle}{\rangle}{#1 \delimsize\vert #2}
\DeclarePairedDelimiterX\bk[2]{\langle}{\rangle}{#1 \delimsize\vert #1}
\DeclarePairedDelimiterX\kbb[2]{\vert}{\vert}{#1 \rangle\langle #2}
\DeclarePairedDelimiterX\kb[1]{\vert}{\vert}{#1 \rangle\langle #1}
\DeclarePairedDelimiter\px{\{}{\}}
\DeclarePairedDelimiter\paren{(}{)}
\DeclarePairedDelimiter\ceil{\lceil}{\rceil}
\newcommand\vN{\mathsf{n}}
\newcommand\vT{\mathsf{t}}
\newcommand\vF{\mathsf{f}}
\newcommand\vL{\mathsf{m}}
\newcommand\vM{\mathsf{m}}
\newcommand\qprot[1]{\Psi_{\mathrm{QPIR}}^{(#1)}}
\newcommand\serv{\mathtt{serv}}
\begin{document}

\title{
Capacity of Quantum Symmetric Private Information Retrieval with Collusion of All But One of Servers}
% Capacity of Quantum Private Information Retrieval with Colluding Servers}

\author{%
Seunghoan Song~\IEEEmembership{Graduate Student Member,~IEEE}
 and~Masahito~Hayashi,~\IEEEmembership{Fellow,~IEEE}% <-this % stops a space
\thanks{SS is supported by Rotary Yoneyama Memorial Master Course Scholarship (YM), Lotte Foundation Scholarship, and JSPS Grant-in-Aid for JSPS Fellows {No.~JP20J11484}. 
MH is supported in part by Guangdong Provincial Key Laboratory (Grant {No.~2019B121203002}),
a JSPS Grant-in-Aids for Scientific Research (A) No.~17H01280 and for Scientific Research (B) No.~16KT0017, and Kayamori Foundation of Information Science Advancement.}
\thanks{Seunghoan Song is with Graduate school of Mathematics, Nagoya University, Nagoya, 464-8602, Japan
(e-mail: m17021a@math.nagoya-u.ac.jp).}
\thanks{Masahito Hayashi is with 
Shenzhen Institute for Quantum Science and Engineering, Southern University of Science and Technology,
Shenzhen, 518055, China,
Guangdong Provincial Key Laboratory of Quantum Science and Engineering,
Southern University of Science and Technology, Shenzhen 518055, China,
Shenzhen Key Laboratory of Quantum Science and Engineering, Southern
University of Science and Technology, Shenzhen 518055, China,
and
the Graduate School of Mathematics, Nagoya University, Nagoya, 464-8602, Japan
(e-mail:hayashi@sustech.edu.cn).}
\thanks{This paper was presented in part at Proceedings of 
	2019 IEEE Information Theory Workshop (ITW) \cite{SH19-2}.}
}

\maketitle

\begin{abstract}
Quantum private information retrieval (QPIR) is a protocol in which a user retrieves one of multiple classical files 
	by downloading quantum systems from non-communicating $\mathsf{n}$ servers 
	each of which contains a copy of all files, 
	while the identity of the retrieved file is unknown to each server.
Symmetric QPIR (QSPIR) is QPIR in which the user only obtains the queried file but no other information of the other files.
In this paper, we consider the $(\mathsf{n}-1)$-private QSPIR 
	in which the identity of the retrieved file is secret even if any $\mathsf{n}-1$ servers collude,
and derive the QSPIR capacity for this problem which is defined as the maximum ratio of the retrieved file size to the total size of the downloaded quantum systems.
For an even number $\mathsf{n}$ of servers, we show that the capacity of the $(\mathsf{n}-1)$-private QSPIR is $2/\mathsf{n}$, when we assume that there are prior entanglements among the servers.
We construct an $(\mathsf{n}-1)$-private QSPIR protocol of rate $\ceil{\mathsf{n}/2}^{-1}$ and 
	prove that the capacity is upper bounded by $2/\mathsf{n}$ even if any error probability is allowed.
%The proposed protocol is symmetric QPIR protocol, i.e., the user obtains no information of non-queried files.
The $(\mathsf{n}-1)$-private QSPIR capacity is strictly greater than the classical counterpart.
\end{abstract}

\section{Introduction}

A private information retrieval (PIR) protocol is a protocol in which a user retrieves a file from servers without revealing the identity of the retrieved file.
Since it was first proposed by the paper \cite{CGKS98}, it has been studied in classical settings \cite{CMS99,Lipmaa10,BS03,DGH12,CHY15} and quantum settings \cite{KdW03,KdW04, Ole11, BB15, LeG12}.
Especially, in the last few years, the classical PIR capacity has been extensively studied, which is the maximum rate of the retrieved file size over the download size when the file size is arbitrarily large and the upload size, i.e., the total size of queries, is negligible to the download size.
% Starting from 
The paper \cite{SJ17} derived the PIR capacity for the most trivial setting in which each server contains the replicated file set.
Several PIR capacities have been derived, e.g., %(see Table \ref{tab:compare}), e.g., 
	when some servers may collude \cite{SJ18}, 
	when the user is also 
		prohibited to obtain any information about the non-queried files \cite{SJ17-2} ({\em symmetric} PIR, SPIR),
	and when the file set is coded and distributed to the servers \cite{BU18}. 
% When the server is 
% the capacity when the user is also prohibited to obtain the non-targeted files \cite{SJ17-2} ({\em symmetric} PIR),
Moreover, 
many other papers \cite{FHGHK17, KLRG17, LKRG18, TSC18, TSC18-2, WS17, WS17-2,BU19, Tandon17} have studied the PIR capacity and the capacity-achieving protocols.
% many other papers \cite{FHGHK17, KLRG17, LKRG18, TSC18} studied the CPIR capacity and the capacity-achieving protocols.
% many other papers \cite{FHGHK17, KLRG17, LKRG18, TSC18, TSC18-2, WS17, WS17-2, BU19, Tandon17} studied the CPIR capacity and the capacity-achieving protocols.
% the papers \cite{FHGHK17, KLRG17, LKRG18} constructed protocols achieving the CPIR capacity with coded and distributed servers.
% and the paper \cite{TSC18} optimized the upload cost of the capacity-achieving CPIR for the trivial setting.

As a quantum extension of the PIR capacity in \cite{SJ17} and the SPIR capacity in \cite{SJ17-2},
	the paper \cite{SH19} proved that the quantum PIR (QPIR) capacity and the QSPIR capacity are both $1$ for the replicated servers.
% The capacity $1$ holds whether it is 
The above QPIR capacity is strictly greater than the classical counterparts \cite{SJ17} and \cite{SJ17-2}. %, it is advantageous to introduce quantum methods to PIR.
However, %since the QPIR capacity in \cite{SH19} only treated the case for replicated servers without collusion,
it still needs to be clarified whether QPIRs in other settings have advantages over the classical counterparts in the sense of the PIR capacities.
For instance, in the PIR for $\vN$ replicated servers where each server contains $\vF$ files and any $\vT$ servers may collude ({\em $\vT$-private PIR}),
% \footnote[1]{It is originally called {\em $\vT$-private private information retrieval} in \cite{SJ18} but we use the term {\em $\vT$-colluded private information retrieval} for clarity.}
	the PIR capacity is $\paren*{1-\vT/\vN}/\big(1-\paren*{\vT/\vN}^{\vF}\big)$ \cite{SJ18}
	and the $\vT$-private SPIR capacity is $(\vN-\vT)/\vN$,
	but the QPIR capacity is unknown.

\begin{table}[t]   \label{tab:compare}
\centering
\caption[caption]{Classical and quantum PIR capacities}
\begin{tabular}{|M{2cm}|M{2.8cm}|M{2.3cm}|N}
\cline{1-3}
                                         &   Classical PIR  &  Quantum PIR  &	\TBstrut
                \\[0.3em]
\cline{1-3}
PIR    &  $\displaystyle \frac{1\!-\!1/\vN}{1\!-\!\paren*{1/\vN}^{\vF}}$     \cite{SJ17} 
                      &  $\displaystyle1$ \cite{SH19} & \TBstrut
                \\
                \cline{1-3}
SPIR    &  $\displaystyle \frac{\vN-1}{\vN}$     \cite{SJ17} 
                      &  $\displaystyle1$ \cite{SH19} & \TBstrut
                \\
\cline{1-3}
$\vT$-private SPIR
        &   $\displaystyle \frac{\vN-\vT}{\vN} $ \cite{WS17-2}
        &   $\displaystyle \frac{2}{\vN}$ $\ddagger$  & \TBstrut
%         for even $\vN$ $\vT=\vN-1$ 
                \\
\cline{1-3}
% Coded servers by $(\vN,\vkc)$ MDS code  
%         &  $\displaystyle\frac{1-\vkc/\vN}{1-\paren*{\vkc/\vN}^{\vF}} $ \cite{BU18}
%         &  $-$   &
%                 \\[0.0em]      
% \cline{1-3}
% \multicolumn{4}{l}{\scriptsize $\ast$ $\vN$: num. of servers, $\vF$: num. of files, $\vT$: num. of colluding servers.}\\
% &&&\\[-1em]
\multicolumn{3}{l}{}\\[-0.9em]
\multicolumn{3}{l}{\scriptsize $\ast$ $\vN$, $\vF$, and $\vT$: the number of servers, files, and colluding servers, respectively.}\\
\multicolumn{3}{l}{\scriptsize $\ddagger$ This paper derives the capacity for any even number $\vN$ and the number $\vT=$}\\
\multicolumn{3}{l}{\enskip \scriptsize $\vN-1$.}
\end{tabular}
\end{table}

In this paper, we prove the $(\vN-1)$-private QSPIR capacity is $2/\vN$ 
% for $\vN$ replicated servers where any $\vN-1$ servers may collude, 
for any even number $\vN$.
% In both of protocol construction and converse proof,
% 	we consider the server secrecy that the user obtains no information of the non-targeted files.
For any number of servers $\vN\geq2$, we construct an $(\vN-1)$-private QSPIR protocol with the rate $\ceil{\vN/2}^{-1}$, no error, and the perfect secrecy.
We also prove the strong converse bound in which 
	the $(\vN-1)$-private QSPIR capacity is upper bounded by $2/\vN$
even if we allow any asymptotic error probability less than $1$.
Since the $(\vN-1)$-private PIR capacity for the infinite number of files and the $(\vN-1)$-private SPIR capacity are $1/\vN$, 
	our QSPIR capacity implies the quantum advantage in PIR with colluding servers.

Our capacity-achieving protocol is a generalization of the QSPIR protocol in \cite{SH19}. 
The protocol in \cite{SH19} extended the classical PIR protocol \cite{CGKS98} by the idea of the superdense coding \cite{BW92}.
Similarly, our protocol extends a classical $(\vN-1)$-private PIR protocol explained below by the idea of
% the protocol \cite{SH19} and 
	the superdense coding \cite{BW92} and 
	the quantum teleportation \cite{BBCJPW93}.
The classical PIR protocol we extend is described as follows.
Let $(\log \vL)$-bit files $W_1,\ldots,W_{\vF}$ be contained in each of $\vN$ servers and the queries $Q_1,\ldots,Q_{\vN-1}$ be independently and uniformly chosen subsets of $\{1,\ldots,\vF\}$.
To retrieve the $K$-th file, the user chooses $Q_{\vN}$ which satisfies $\bigoplus_{t=1}^{\vN} Q_{t} = \{K\}$, where $\bigoplus$ is the symmetric difference,
	and sends the queries $Q_1,\ldots,Q_{\vN}$ to each server.
For each $t\in\{1,\ldots,\vN\}$, 
	the $t$-th server returns $H_t \coloneqq \sum_{i\in Q_t} W_i$ to the user
	and then the user can retrieve $W_K = \sum_{t=1}^{\vN} H_t$,
	where both summations are with respect to the addition modulo $2$.
%	where the addition is modulo $2$ in both summations.
The protocol is private because the collection of any $\vN-1$ variables in $Q_1,\ldots,Q_{\vN}$ is independent of the query index $K$.

Our capacity-achieving protocol has several remarkable properties. % over its classical counterpart in \cite{SJ18}.
First, 
% symmetric QPIR protocol since the user obtains no information of non-targeted files.}
our protocol is a symmetric QPIR protocol, i.e., it guarantees the server secrecy in which no file information other than the queried one is transmitted to the user.
Second, 
the upload cost of our protocol is $\vN\vF$ bits, which is linear for the number of servers $\vN$ and the number of files $\vF$ but independent of the file size $\vL$. %, whereas that of \cite{SJ18} \myred{is }.
%whereas the protocol in \cite{SJ18} obtains \myred{some information of non-retrieved files.}
Third, our protocol requires the file size $\vL=2^{2\ell}$, i.e., $2\ell$ bits, for any positive integer $\ell$,
whereas the $(\vN-1)$-private PIR protocol in \cite{SJ18} requires the file size $\vL=q^{\vN^{\vF}}$ depending on $\vN$ and $\vF$ for a sufficiently large prime power $q$. %, which depends on $\vN$ and $\vF$.

Following the conference version of this work, 
	the paper \cite{AHPH20} proposed a QSPIR protocol for coded and colluding servers 
		which works for any $[\vN,k]$-MDS code and secure against $\vT$-collusion with $\vT=\vN-k$.
Their protocol is an extension of the protocol of this paper by combination with the classical PIR protocol \cite{FHGHK17},
	and it achieves better rates than the classical counterparts \cite{FHGHK17,WS17-2}.
The paper \cite{SH20} improved the capacity result of this paper for any number of colluding servers: For any $1< \vT< \vN$, the $\vT$-private QSPIR capacity is $\min\{1,2(\vN-\vT)/\vN\}$.
Even though the capacity of \cite{SH20} generalizes that of this paper,
	our protocol and converse proof have the following advantages compared to \cite{SH20}.
First, whereas the protocol of \cite{SH20} requires multipartite entanglement as prior entanglement,
	our protocol only requires multiple copies of bipartite entangled states. % instead of a large entangled state.
	Since the bipartite entanglement is more reliably generated with current technology,
	our construction is more suitable for the implementation on near-term quantum devices than that of \cite{SH20}.
Second, our protocol is more constructive and easier to understand than the protocol of \cite{SH20}.
	Our protocol is a combination of two simple protocols: quantum teleportation \cite{BBCJPW93} and superdense coding \cite{BW92}, 
		which have been experimentally realized in \cite{QT1,QT2} and \cite{Dense1,Dense2}, respectively.
	On the other hand, the protocol in \cite{SH20} is constructed with the sophisticated method of stabilizer formalism.
	Thus, our protocol is more accessible to the experimentalists and the theorists who are not familiar with the stabilizer formalism.
Lastly, the converse proof of our paper is much simpler than that of \cite{SH20}
	because we only prove for the case of $\vT=\vN-1$.

% However, 
% 	the protocol in this paper has advantage over that of \cite{SH20} 
% 		since our protocol requires many copies of bipartite entangled state,
% 	whereas the protocol of \cite{SH20} requires multipartite entanglement among all servers as prior entanglement, which is hard 
% % 	whereas the protocol in this paper requires many copies of bipartite entangled state.

The rest of the paper is organized as follows.
Section \ref{sec:problem} defines the QPIR protocol and presents our main theorems for the $(\vN-1)$-private QSPIR capacity.
Section \ref{sec:prelim} is preliminaries for the protocol construction and
Section \ref{sec:protocol} constructs the QSPIR protocol with $\vN-1$ colluding servers.
Section \ref{sec:converse} proves the strong converse bound when the perfect secrecy is guaranteed.

% In classical PIR, 
% the paper \cite{} showed that 
% the capacity is 
% $\paren*{1-\frac{\vT}{\vN}}/\big(1-\paren*{\frac{\vT}{\vN}}^{\vF}\big)$
% % $\paren*{ + \paren*{\frac{\vT}{\vN}}^{2} + \cdots + \paren*{\frac{\vT}{\vN}}^{\vF-1} }^{-1}$
% % $\paren*{1+\frac{\vT}{\vN} + \paren*{\frac{\vT}{\vN}}^{2} + \cdots + \paren*{\frac{\vT}{\vN}}^{\vF-1} }^{-1}$
% when there is $\vN$ replicated servers 
% each containing $\vF$ files where any $\vT$ servers may collude.

% PIR capacity with colluding servers

% 
% for $\vN$ servers each containing where any $\vN-1$ servers.

\subsubsection*{Terms and Notations}
\textit{
We summarized the basic notions of quantum information theory in Appendix~A.
For two sets $A$ and $B$, define $A\oplus B \coloneqq (A\backslash B) \cup (B\backslash A)$.
The identity operator on any space $\cH$ is denoted by $\mathsf{I}_{\cH}$, or simply by $\mathsf{I}$ if there is no confusion.
% For pure states on composite system, $|i,i\rangle \coloneqq | i\rangle \otimes |i \rangle$.
For a state $\rho$ on a composite system $\cH\otimes\mathcal{B}$, %on the composite system of $\mathcal{A}$ and $\mathcal{B}$,
denote the quantum mutual information between the systems $\mathcal{A}$ and $\mathcal{B}$ by $I(\mathcal{A};\mathcal{B})_{\rho}$.
The classical mutual information is denoted as $I(X;Y)$ without subscript.
For states $\rho_r$ on a system $\cH$ which depend on the value $r$ of a random variable $R$,
	we denote $I(R;\mathcal{A})_{\rho_R} \coloneqq I(R;\mathcal{A})_{\sum_{r} p_r |r\rangle\langle r|\otimes \rho_r}$.
For a matrix $A$, we denote the transpose of $A$ by $A^{\top}$ and the conjugate transpose of $A$ by $A^{\dagger}$.
}

\section{Problem Statement and Main Results}    \label{sec:problem}

In this section, we review the QPIR protocol given in the paper \cite{SH19}, which is defined in the same way for the $(\vN-1)$-private QPIR except for the security measures.
% After defining the security parameters of colluded QPIR protocols, 
Then, we present two main theorems for the $(\vN-1)$-private QSPIR capacity.

\subsection{Description of QPIR Protocol} %(Fig. \ref{fig:protocol_model})}

% We review QPIR protocol defined in \cite{SH19}.

\begin{figure}[t]
\begin{center}
        \resizebox {0.9\linewidth} {!} {
\includegraphics[width=0.7\linewidth]{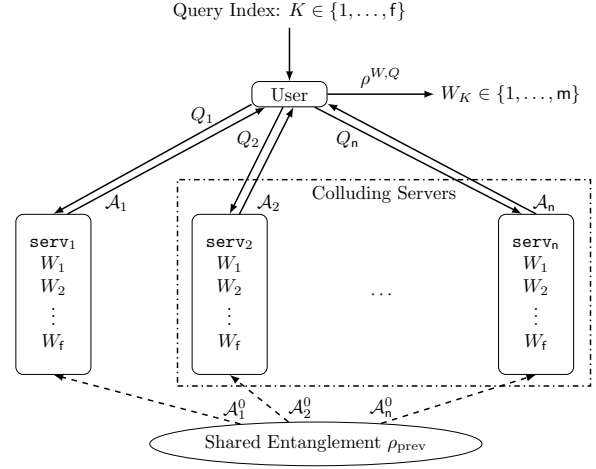} 
	}
\caption{Quantum private information retrieval protocol with collusion of all but one of servers.
The user does not know which $\vN-1$ servers collude.
% The set of colluding $\vN-1$ servers (dash-dotted) is unknown arbitrary.
}\label{fig:protocol_model}
\end{center}
\end{figure}

Let $W_1,\ldots, W_{\vF}$ be uniformly and independently distributed in $\{0,\ldots,\vL-1\}$.
Each of non-communicating $\vN$ servers $\mathtt{serv}_1,\ldots,\mathtt{serv}_{\vN}$ contains the replicated $\vF$-file set $W\coloneqq(W_1,\ldots, W_{\vF})$. 
In addition, each server $\mathtt{serv}_t$ contains a quantum system $\cH_t^{0}$
and the $\vN$ servers share an arbitrary entangled state $\rho_{\mathrm{prev}}$ on $\cH_1^{0}\otimes \cdots\otimes\cH_{\vN}^{0}$.
A user chooses the query index $K\in\{1,\ldots,\vF\}$ uniformly at random.
The aim of the QPIR protocol is for the user to retrieve the file $W_K$ from the servers.

For this purpose, the user and the servers perform the following process.
First, the user chooses a random variable $R_{\mathrm{user}}$ from a set $\mathcal{R}_{\mathrm{user}}$ and encodes the queries $Q_1,\ldots,Q_\vN$ by the user encoder $\mathsf{Enc}_{\mathrm{user}}$ as follows:
\begin{align*}
\mathsf{Enc}_{\mathrm{user}} (K,R_{\mathrm{user}}) = Q =  (Q_1,\ldots,Q_{\vN})  \in \mathcal{Q}_1\times\cdots\times \mathcal{Q}_{\vN},
\end{align*}
where $\mathcal{Q}_t$ for any $t\in\{1,\ldots,\vN\}$ is a finite set for describing possible query indexes to the server $\mathtt{serv}_t$.
Then, for any $t\in\{1,\ldots,\vN\}$, each query $Q_t$ is sent to the server $\mathtt{serv}_t$.
%\myred{Let $\cH_1,\ldots,\cH_{\vN}$ be the same-dimensional systems.}
After receiving the query $Q_t$, each server $\mathtt{serv}_t$ prepares a 
trace-preserving and completely positive (TP-CP) map\footnote[1]{
Server operations induced by $\mathsf{Enc}_{\mathrm{serv}}$ may contain
random operations and measurements because TP-CP maps contain random operations and measurements.}
$\Lambda_t$ from $\cH_t^{0}$ to $\cH_t$ by the server encoder $\mathsf{Enc}_{\mathrm{serv}_t}$, i.e., 
\begin{align*}
\mathsf{Enc}_{\mathrm{serv}_t} (Q_t, W) =  \Lambda_t
\end{align*}
and the resultant state on $\cH\coloneqq\cH_1\otimes \ldots\otimes \cH_{\vN}$ is 
\begin{align*}
\rho^{W,Q} \coloneqq \Lambda_1\otimes\cdots\otimes \Lambda_{\vN} (\rho_{\mathrm{prev}})  %\in \mathcal{S}(\cH_1\otimes \cdots \otimes \cH_{\vN}).
\end{align*}
Then, each server $\serv_t$ sends $\cH_t$ to the user.
%finally
Depending on $K$ and $Q$, the user chooses the decoder $\mathsf{Dec}(K,Q)$, which is a 
	positive-operator valued measure (POVM) $\{\mathsf{Y}_i \mid i\in \{0,\ldots, \vL\}\}$ on $\cH$.
By performing the POVM measurement $\mathsf{Dec}(K,Q)$ on $\cH$,
	the user obtains the measurement outcome $\hat{W}_K\in\{0,\ldots,\vL\}$ as the retrieval result, where the outcome $\vL$ is considered as decoding failure.
With probability $\Tr \rho^{W,Q}\mathsf{Y}_{W_K}$, the queried file is obtained, i.e., $\hat{W}_K = W_K$.

%retrieves the queried file $W_K$ by the decoder $\mathsf{Dec}(K,Q)$ defined as follows:
%\myred{$\mathsf{Dec}(K,Q)$ is 
%	%\footnote[2]{a POVM is a set $\{\mathsf{Y}_i \mid i\in \{0,\ldots, \vL-1\}\}$ of positive semidefinite matrices on $\cH$ satisfying $\sum_{i=0}^{\vL-1} \mathsf{Y}_i \leq \mathsf{I}_{\cH}$.}
%%	%, which is the set of Hermitian matrices $\{M_{x}\}_{x\in\mathcal{X}}$ on $\cH$ such that 
%%	%\begin{align}
%%	%M_x \geq 0, \quad \sum_{x \in \mathcal{X}}  M_x = I_{\cH}.
%%	%\end{align}
%	depending on the query index $K$ and the query $Q$.

Given the number of servers $\vN$ and the number of files $\vF$,
% Given the numbers $\vN$ and $\vF$ of servers and files,
a QPIR protocol is described by the four-tuple 
$$\Psi_{\mathrm{QPIR}}^{(\vL)} \coloneqq (\rho_{\mathrm{prev}}, \mathsf{Enc}_{\mathrm{user}}, \mathsf{Enc}_{\mathrm{serv}}, \mathsf{Dec}),$$
where $\mathsf{Enc}_{\mathrm{serv}} \coloneqq (\mathsf{Enc}_{\mathrm{serv}_{1}}, \ldots,\mathsf{Enc}_{\mathrm{serv}_{\vN}})$.
Note that the QPIR protocol $\Psi_{\mathrm{QPIR}}^{(\vL)}$ characterizes the process in the previous paragraph.
The upload cost, download cost, and rate of a protocol $\Psi_{\mathrm{QPIR}}^{(\vL)}$ are defined by
\begin{gather*}
U(\qprot{\vL}) = \prod_{t=1}^{\vN} |\mathcal{Q}_t| , \quad
D(\qprot{\vL}) = \dim \bigotimes_{t=1}^{\vN} \cH_t, \\
R(\qprot{\vL}) = \frac{\log\vL}{\log D(\qprot{\vL})} .
\end{gather*}

\subsection{Security Measures and $(\vN-1)$-Private QPIR Capacity}
% In the remainder of the paper, we assume that 
% the number $\vN$ of servers and the number $\vF$ of files are fixed, and
We define security measures and the capacity of the $(\vN-1)$-private QPIR. % when any $\vN-1$ servers may collude to reveal the identity $K$ of the retrieved file $W_K$.

\subsubsection{Security Measures}
$(\vN-1)$-Private QPIR is QPIR in which any $\vN-1$ servers may collude to determine $K$ but it should not be leaked even if the user does not know which servers are colluding.
Furthermore, we also consider the server secrecy in which the user only obtains the queried file $W_K$ but no information of other files.
Thus, we evaluate the security of a protocol $\Psi_{\mathrm{QPIR}}^{(\vL)}$ by the error measure, the server secrecy measure, and the user secrecy measure.
The error measure is defined by %the smallest real number such that
\begin{align*}
\alpha(\qprot{\vL}) \coloneqq P_{\mathrm{err}}(\Psi_{\mathrm{QPIR}}^{(\vL)}) = \mathbb{E}_{W,K,Q} (1-\Tr \rho^{W,Q}\mathsf{Y}_{W_K}),
\end{align*}
where $P_{\mathrm{err}}(\Psi_{\mathrm{QPIR}}^{(\vL)} )$ is the average error probability of the protocol.
The server and user secrecy measures $\beta(\qprot{\vL})$ and $\gamma(\qprot{\vL})$ are defined 
respectively as
\begin{align}
\beta(\qprot{\vL})  &\coloneqq  \max_{  i,k\in\{1,\ldots, \vF\}: i\neq k  } I(W_i;\cH|K=k)_{\rho^{W,Q}} ,    \label{eq:serv_sec}\\
\gamma(\qprot{\vL}) &\coloneqq  \max_{t\in\{1,\ldots,\vN\}} I(K;Q_t') , \label{eq:user_sec} 
%d(P_{Q',K=k},P_{Q',K=i}) & \leq \gamma \quad \forall i\neq k \in\{1,\ldots, \vF\} , 
% \label{eq:user_sec} 
\end{align}
where $Q_t'$ is the collection of queries to all servers other than $\mathtt{serv}_t$.
If these measures are zero, the protocol $\Psi_{\mathrm{QPIR}}^{(\vL)}$ is called a $(\vN-1)$-private QSPIR protocol,
	and if $\alpha(\qprot{\vL}) =  \gamma(\qprot{\vL}) = 0$, it is called a $(\vN-1)$-private QPIR protocol.

\subsubsection{$(\vN-1)$-private QPIR Capacity}

The $(\vN-1)$-private QPIR capacity is defined with the security and upload constraints.
For any $\alpha$, $\beta$, $\gamma$, $\theta \geq 0$, the {\em asymptotic} and {\em exact security-constrained $(\vN-1)$-private QPIR capacities} are defined by

\begin{align*}%\\[-3em]
C_{\mathrm{asymp}}^{\alpha,\beta,\gamma,\theta} 
				& \coloneqq \sup_{\eqref{con1}} 
				\liminf_{\ell\to\infty} R(\Psi_{\mathrm{QPIR}}^{(\vM_\ell)}),\\
C_{\mathrm{exact}}^{\alpha,\beta,\gamma,\theta} 
				& \coloneqq \sup_{\eqref{con2}}
				\liminf_{\ell\to\infty} R(\Psi_{\mathrm{QPIR}}^{(\vM_\ell)}),
% \\[-2em]
\end{align*}
% \end{strip}
% \!\!\!where the supremum is taken for sequences $\{\vM_\ell\}_{\ell=1}^{\infty}$ such that $\lim_{\ell\to\infty} \vM_\ell = \infty$ 
where the supremum is taken for sequences $\{\vL_\ell\}_{\ell=1}^{\infty}$ such that $\lim_{\ell\to\infty} \vL_\ell = \infty$ and sequences $\{\Psi_{\mathrm{QPIR}}^{(\vL_\ell)}\}_{\ell=1}^{\infty}$ of QPIR protocols
	to satisfy either \eqref{con1} or \eqref{con2} given by
\begin{align}    \label{con1} %\tag{$\ast$} 
\begin{split} 
\!\!\!\limsup_{\ell\to\infty}\alpha(\qprot{\vL_{\ell}})\leq \alpha,,  \enskip
& \limsup_{\ell\to\infty}\beta(\qprot{\vL_{\ell}})\leq \beta,\\
\!\!\!\limsup_{\ell\to\infty}\gamma(\qprot{\vL_{\ell}})\leq \gamma,, \enskip
& \limsup_{\ell\to\infty} \frac{\log U(\Psi_{\mathrm{QPIR}}^{(\vL_\ell)})}{\log D(\Psi_{\mathrm{QPIR}}^{(\vL_\ell)})} \leq  \theta,
\end{split}
\end{align}
and 
\begin{gather}  \label{con2} 
\begin{split}
 \alpha(\qprot{\vL_{\ell}})\leq \alpha, \enskip
& \beta(\qprot{\vL_{\ell}})\leq \beta,   \\
 \gamma(\qprot{\vL_{\ell}})\leq \gamma, \enskip
& \limsup_{\ell\to\infty} \frac{\log U(\Psi_{\mathrm{QPIR}}^{(\vL_\ell)})}{\log D(\Psi_{\mathrm{QPIR}}^{(\vL_\ell)})} \leq  \theta
.
\end{split}
\end{gather}
The parameters $\alpha,\beta,\gamma,\theta$ are the upper bounds of the error, server secrecy, user secrecy, and upload cost, respectively,
	and the two capacities 
	$C_{\mathrm{asymp}}^{\alpha,\beta,\gamma,\theta}$,
	$C_{\mathrm{exact}}^{\alpha,\beta,\gamma,\theta}$ 
	are defined as the supremum of QPIR rates for all QPIR protocols satisfying the upper bounds asymptotically and exactly, respectively.
Since any protocols satisfying the upper bounds $\alpha,\beta,\gamma,\theta$ exactly also satisfy the bounds asymptotically,
	we have the inequality
	$C_{\mathrm{asymp}}^{\alpha,\beta,\gamma,\theta} \geq C_{\mathrm{exact}}^{\alpha,\beta,\gamma,\theta} \geq C_{\mathrm{exact}}^{0,0,0,0}$.

%	are defined as the supremum of QPIR rates for all QPIR protocols satisfying the upper bounds.
%The capacity $C_{\mathrm{asymp}}^{\alpha,\beta,\gamma,\theta}$ is the capacity for the case where the upper bounds are asymptotically satisfied and $C_{\mathrm{exact}}^{\alpha,\beta,\gamma,\theta}$ is the capacity for the case where the upper bounds are exactly satisfied.

\subsection{Main Results}
Two main theorems of the paper are as follows.
\begin{theo} \label{theo:main2}
For any $\vN\geq 2$ servers and $\vF\geq 2$ files, 
	%where the collection of any $\vN-1$ servers may collude,
	there exists a $(\vN-1)$-private QSPIR protocol with 
the rate $\ceil*{\vN/2}^{-1}$,
zero security measures, 
$\vN\vF$-bit upload cost, 
and 
$2\ell$-bit files for any integer $\ell\geq 1$.
\end{theo}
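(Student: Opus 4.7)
My plan is to construct the protocol by combining the classical $(\vN-1)$-private PIR protocol recalled in the introduction with superdense coding for the bipartite transmission, and with shared randomness distilled from prior entanglement playing the role of a quantum teleportation--style one-time pad, the latter being what upgrades PIR to SPIR.

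Classically, the user draws independent uniform queries $Q_1,\ldots,Q_{\vN-1}\subseteq\{1,\ldots,\vF\}$ and sets $Q_\vN$ so that $\bigoplus_t Q_t=\{K\}$; each server $\serv_t$ computes $H_t=\sum_{i\in Q_t}W_i$, a $2\ell$-bit string, so that $\sum_t H_t=W_K$. For even $\vN=2m$, I pair the servers as $(1,2),\ldots,(2m-1,2m)$ and supply each pair with $\ell$ shared Bell states (the superdense resource); in addition, the $\vN$ servers are given a further entangled resource which, when measured locally, yields classical masks $Z_1,\ldots,Z_\vN\in\{0,1\}^{2\ell}$ whose joint distribution is uniform subject to $\sum_t Z_t=0$. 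Each server $\serv_t$ writes $H_t+Z_t$ as $\ell$ bit-pairs $(a_{t,i},b_{t,i})$, applies $\bigotimes_{i=1}^{\ell}\X^{a_{t,i}}\Z^{b_{t,i}}$ to its $\ell$ halves of the Bell pairs, and forwards the resulting qubits to the user. The user performs $\ell$ Bell measurements on the $2\ell$ qubits arriving from each pair, obtaining $V_j=(H_{2j-1}+H_{2j})+(Z_{2j-1}+Z_{2j})$ by the standard Pauli bookkeeping for superdense coding; summing yields $\sum_j V_j=W_K$. For odd $\vN$ I pair up $\vN-1$ of the servers as above and let the remaining server transmit $H_t+Z_t$ directly as a computational-basis state of dimension $2^{2\ell}$, giving the same group structure with $\ceil{\vN/2}$ groups.

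The security analysis splits into three pieces. Correctness is exact since each combined Pauli $\X^{a_{2j-1,i}+a_{2j,i}}\Z^{b_{2j-1,i}+b_{2j,i}}$ on a Bell pair is perfectly distinguishable by Bell measurement and the mask sum vanishes. User secrecy $\gamma(\qprot{\vL})=0$ follows because any $\vN-1$ of $Q_1,\ldots,Q_\vN$ are independent uniform and hence independent of $K$, and the prior entanglement is prepared before $K$ is drawn. The cost count is direct: each query is a $\vF$-bit subset indicator, giving upload $\vN\vF$; the download space has dimension $2^{2\ell\ceil{\vN/2}}$, so the rate is $\ceil{\vN/2}^{-1}$.

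The main obstacle will be establishing $\beta(\qprot{\vL})=0$ at the level of the quantum state rather than of classical outcomes. The key claim is that, conditioned on $K=k$, the joint state delivered to the user depends on $(W_1,\ldots,W_\vF)$ only through $W_k$. The mechanism is that the effective pair-masks $\{Z_{2j-1}+Z_{2j}\}_j$ (together with the solo mask in the odd case) are jointly uniform subject only to the linear constraint that their sum vanishes, so on each group's Bell pair the applied combined Pauli ranges uniformly over a coset whose parity is determined solely by $W_k$; after averaging, the received reduced state is a uniform mixture of Bell states on $\ell\ceil{\vN/2}$ qubit pairs constrained only by $\sum_j V_j=W_k$. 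Once this structural identity is in hand, the quantum mutual information $I(W_i;\cH|K=k)$ vanishes for all $i\neq k$, which completes the proof.
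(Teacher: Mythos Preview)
Your construction and analysis are correct, but the route differs substantively from the paper's. You pair the servers directly for superdense coding and introduce an explicit zero-sum shared mask $(Z_1,\ldots,Z_\vN)$---realizable within the model as part of $\rho_{\mathrm{prev}}$---so that the averaged received state depends only on $W_K$; this cleanly gives $\beta=0$. The paper instead arranges the servers in a \emph{chain}: Bell pairs are shared between adjacent servers, each intermediate server $\serv_t$ applies $\mathsf{W}(H_t)$ and then performs a teleportation-style Bell measurement whose outcome $G_t$ plays the role of your mask. Because the local pre-measurement state is maximally mixed, the $G_t$ are automatically uniform and file-independent, so no separate mask resource is needed; the $G_t$ are then conveyed to the user via a second layer of superdense coding (the ``two-sum transmission protocol'') between pairs $(\serv_{2j},\serv_{2j+1})$, while only $\serv_1$ and $\serv_\vN$ forward their chain qubits. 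Both approaches yield the same rate, upload cost, and perfect security; yours is more elementary and the server-secrecy argument more transparent, whereas the paper's construction has the practical feature---explicitly emphasized by the authors---of using only bipartite Bell pairs as $\rho_{\mathrm{prev}}$, with server secrecy arising intrinsically from the teleportation randomness rather than from an added multipartite zero-sum resource.
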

Section \ref{sec:protocol} constructs the protocol that achieves the performance given in Theorem \ref{theo:main2}. 
When $\vN=2$, the protocol in Section \ref{sec:protocol} corresponds to the protocol in \cite{SH19}.

\begin{theo}[Capacity of $(\vN-1)$-private QPIR] \label{theo:main}
For any $\vN\geq 2$ servers and $\vF\geq 2$ files,
% For the $(\vN-1)$-colluded QPIR with $\vN$ servers,
% For any even positive integer $\vN$,
the $(\vN-1)$-private QPIR capacity satisfies %for $\vN$ servers of which any $\vN-1$ servers may collude is 
\begin{align}
C_{\mathrm{asymp}}^{\alpha,\beta,\gamma,\theta} \ge
C_{\mathrm{exact}}^{\alpha,\beta,\gamma,\theta} \ge
C_{\mathrm{exact}}^{0,0,0,0} 
& \ge \ceil*{\frac{\vN}{2}}^{-1}, \label{H1}\\
C_{\mathrm{asymp}}^{\alpha,0,0,\theta} 
& \le \frac{2}{\vN} \label{H2}
\end{align}
for any $\alpha\in [0, 1)$ and $\beta, \gamma, \theta\geq 0$.
\end{theo}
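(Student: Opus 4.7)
The first inequality \eqref{H1} is immediate from Theorem~\ref{theo:main2}: the protocol constructed there has rate $\ceil*{\vN/2}^{-1}$, zero error, and zero server- and user-secrecy measures; since zero is at most any nonnegative $\alpha,\beta,\gamma,\theta$, this single protocol witnesses all three capacities in \eqref{H1} simultaneously.

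The substance is the strong converse \eqref{H2}. I will consider an arbitrary sequence $\{\Psi_{\mathrm{QPIR}}^{(\vL_\ell)}\}_\ell$ with $\beta(\cdot)=\gamma(\cdot)=0$ exactly and $\limsup_\ell \alpha(\cdot)\le \alpha<1$, and aim to show $\limsup_\ell \log \vL_\ell/\log D_\ell \le 2/\vN$. My first step will be a \emph{structural rigidity} observation. For any server index $t$, write $\cH_{-t} := \bigotimes_{s\neq t}\cH_s$. Perfect user secrecy $\gamma=0$ gives that $Q_t'$ is independent of $K$, so the $Q_t'$-averaged reduced state $\bar\rho_{-t}^{W} := \mathbb{E}_{Q_t'}\,\rho_{-t}^{W,Q_t'}$ on $\cH_{-t}$ does not depend on $K$. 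Perfect server secrecy $\beta=0$ gives that, conditionally on $K=k$, the reduced state on $\cH_{-t}$ depends on $W$ only through $W_k$. Since $\vF\ge 2$, applying the latter to two distinct values $k\ne k'$ forces $\bar\rho_{-t}^W$ to be simultaneously a function of $W_k$ alone and of $W_{k'}$ alone, so it must be constant in $W$: the colluders' averaged view carries no information about any file.

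Next I will combine this rigidity with a superdense-coding-style Holevo bound to obtain the structural inequality
\[
I(W_K;\cH \mid K,Q)_{\rho}\;\le\;\tfrac{2}{\vN}\log D.
\]
For each choice of the non-colluding server $t$, the $K$-sensitive information about $W_K$ is transmitted through the $d_t$-dimensional system $\cH_t$ alone, while $\cH_{-t}$ functions as a $K$-independent pre-shared entangled resource; the superdense-coding Holevo bound then gives $I(W_K;\cH\mid K,Q)\le 2\log d_t$ for every $t\in\{1,\ldots,\vN\}$, and summing over $t$ with $\sum_t \log d_t = \log D$ yields $\vN\cdot I(W_K;\cH\mid K,Q)\le 2\log D$. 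To convert this Holevo bound into a strong converse on the rate, I will invoke the Ogawa--Nagaoka strong converse for classical-quantum channels: with $(K,Q)$ as side information, $W_K\mapsto \rho^{W,Q}$ is a cq-channel whose Holevo capacity is at most $(2/\vN)\log D$ by the above, so any decoder with asymptotic error $\alpha<1$ satisfies $\log \vL_\ell \le (2/\vN)\log D_\ell + o(\log D_\ell)$; dividing through and passing to the limit closes the proof.

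The hard part will be the per-$t$ bound $I(W_K;\cH\mid K,Q)\le 2\log d_t$. The subtlety is that $\cH_{-t}$ depends nontrivially on $W$ for \emph{fixed} $Q_t'$ even though its $Q_t'$-average does not, so the superdense-coding interpretation, in which $\cH_{-t}$ plays the role of a $W$-independent entanglement resource and $\cH_t$ the role of the freshly transmitted system, must be made rigorous through the rigidity lemma. Preserving the factor of $2$ — which is precisely what distinguishes the quantum capacity $2/\vN$ from the classical counterpart $1/\vN$ of $(\vN-1)$-private SPIR — is the technical heart of the argument; formalising this step and checking that the cq-strong-converse machinery applies to the effective single-shot channel here should occupy the bulk of Section~\ref{sec:converse}.
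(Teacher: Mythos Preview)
Your treatment of \eqref{H1} and your rigidity observation match the paper: Lemma~\ref{L1} is exactly the statement that the colluders' joint state carries no information about the queried file, and the paper derives it from perfect user and server secrecy just as you do. One difference worth noting is that the paper uses the \emph{pointwise} version --- for each fixed query tuple $q$ and fixed non-queried files $w^{c}$, the reduced state on $\cH_t'$ is independent of $w=W_K$ --- rather than your $Q_t'$-averaged version. You already flag this as ``the subtlety,'' but you should recognise that without the pointwise statement you cannot even establish the conditional Holevo bound $I(W_K;\cH\mid K,Q)\le 2\log d_t$, since conditioning on $Q$ fixes $Q_t'$.

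The real gap is the strong-converse step. The Ogawa--Nagaoka theorem is a statement about \emph{i.i.d.}\ uses of a fixed cq channel; here you face a sequence of single-shot channels whose input alphabet $\{0,\dots,\vL_\ell-1\}$ and output dimension $D_\ell$ both vary with $\ell$, with no tensor-power structure to exploit. A Holevo bound on each one-shot channel yields, via Fano, only a weak converse ($\alpha\to 0$), not the claim for all $\alpha<1$. To get a strong converse you need to bound a R\'enyi-type quantity of order greater than one, not merely the von~Neumann mutual information. This is precisely what the paper does: instead of Holevo, it applies a one-shot R\'enyi bound (Proposition~\ref{prop:samm}) with the reference state $\sigma=\rho_z'\otimes \mathsf{I}_{\cH_{t}}/d_t$ --- here the pointwise rigidity enters to make $\rho_z'$ well-defined independently of $w$ --- and then bounds $\Tr\rho_{w,z}^{1+r}\sigma^{-r}\le d_t^{2r}$ for every $r\in(0,1)$ via monotonicity of the R\'enyi relative entropy (Proposition~\ref{ineq:infoprr}) and a Schmidt-rank argument (Proposition~\ref{prop:dxpuresmax}). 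The factor of~$2$ you correctly identify as the crux appears in this last step, because the Schmidt rank of any pure component of $\rho_{w,z}$ across the cut $\cH_t\,|\,\cH_t'$ is at most $d_t$. So the ``hard part'' is not the Holevo-level superdense-coding heuristic --- that part is actually straightforward once rigidity is in hand --- but rather lifting the $2\log d_t$ bound from order~$1$ to all orders $1+r$; your plan leaves this unaddressed.
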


The last inequality in \eqref{H1} follows from Theorem \ref{theo:main2} 
and
the inequality \eqref{H2} will be proved in Section \ref{sec:converse}.
From Theorem~\ref{theo:main}, we obtain the following corollary. %, which is the main result of this paper.
\begin{coro}
For any even number of servers $\vN$ and any number of files $\vF\geq 2$,
	the $(\vN-1)$-private QSPIR capacity is $2/\vN$.%for $\vN$ servers of which any $\vN-1$ servers may collude is 
\end{coro}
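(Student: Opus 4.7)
The plan is a straightforward sandwiching of the $(\vN-1)$-private QSPIR capacity---the case $\beta=\gamma=0$---between the lower bound \eqref{H1} and the upper bound \eqref{H2} of Theorem~\ref{theo:main}, exploiting the fact that for even $\vN$ the ceiling $\lceil\vN/2\rceil$ collapses to $\vN/2$.

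First, I would observe that any $(\vN-1)$-private QSPIR protocol is by definition a protocol with $\alpha=\beta=\gamma=0$. For even $\vN$, the achievability inequality \eqref{H1} then yields
\[
C_{\mathrm{exact}}^{0,0,0,0}\ \geq\ \left\lceil \frac{\vN}{2}\right\rceil^{-1}\ =\ \frac{2}{\vN},
\]
and by the monotonicity $C_{\mathrm{asymp}}^{0,0,0,\theta}\geq C_{\mathrm{exact}}^{0,0,0,\theta}\geq C_{\mathrm{exact}}^{0,0,0,0}$ already recorded after the definition of the two capacities, the same lower bound passes to the asymptotic QSPIR capacity under any upload constraint $\theta\geq 0$.

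Second, for the converse I would apply \eqref{H2} with $\alpha=0\in[0,1)$ and with $\beta=\gamma=0$ to obtain
\[
C_{\mathrm{asymp}}^{0,0,0,\theta}\ \leq\ \frac{2}{\vN}.
\]
Combining the two displayed inequalities pins down both $C_{\mathrm{exact}}^{0,0,0,0}$ and $C_{\mathrm{asymp}}^{0,0,0,\theta}$ at the common value $2/\vN$, which is the claimed $(\vN-1)$-private QSPIR capacity.

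Since the corollary follows in one line from Theorem~\ref{theo:main} together with the arithmetic identity $\lceil\vN/2\rceil=\vN/2$ valid for even $\vN$, there is essentially no obstacle at this stage; all of the technical weight has already been absorbed into the protocol construction of Section~\ref{sec:protocol} (for the achievability) and the strong converse argument of Section~\ref{sec:converse} (for the upper bound).
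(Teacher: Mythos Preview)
Your proposal is correct and matches the paper's approach: the corollary is stated immediately after Theorem~\ref{theo:main} without a separate proof, precisely because it follows from sandwiching the capacity between \eqref{H1} and \eqref{H2} and using $\lceil \vN/2\rceil=\vN/2$ for even $\vN$.
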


\section{Preliminaries for Protocol Construction}      \label{sec:prelim}

In this section, we prepare two simple protocols to describe our $(\vN-1)$-private QSPIR protocol.

\subsection{Preliminaries on States, Operations, and Measurements} \label{subsec:prelim}

A qubit $\cH$ is a two-dimensional complex Hilbert space spanned by an orthonormal basis $\{|0\rangle, |1\rangle\}$.
Define the maximally entangled state $|\Phi\rangle$ on two qubits $\cH\otimes \cH$ by
\begin{align*}
|\Phi \rangle \coloneqq  \frac{1}{\sqrt{2}} \sum_{i=0}^{1}| i\rangle\otimes |i \rangle.
\end{align*}
For any $a,b\in\mathbb{Z}_2$, define Pauli operations on $\cH\otimes\cH$ by
\begin{gather*}
\mathsf{X} \coloneqq \sum_{i=0}^{1} |i+1\rangle \langle i |,\
\mathsf{Z} \coloneqq \sum_{i=0}^{1} (-1)^{i} |i\rangle \langle i |,\
\mathsf{W}(a,b) \coloneqq \mathsf{X}^a\mathsf{Z}^b.
\end{gather*}
It can be easily checked that these operations satisfy the relations
\begin{align}
\mathsf{Z}^b\mathsf{X}^a &= (-1)^{ab} \mathsf{X}^a\mathsf{Z}^b = (-1)^{ab}\mathsf{W}(a,b),\\
\mathsf{W}(a,b)^\dagger & %= \mathsf{Z}^{-b}\mathsf{X}^{-a} 
= \mathsf{W}(a,b)^\top = (-1)^{ab} \mathsf{W}(a,b), \label{property0}\\
\mathsf{W}(a_1,b_1)\mathsf{W}(a_2,b_2) &= (-1)^{b_1a_2} \mathsf{W}(a_1+a_2, b_1+b_2) \label{eq:sum}.
\end{align}
Applying \eqref{eq:sum} twice, we also have
\begin{align}
&\mathsf{W}(a_1,b_1)\mathsf{W}(a_2,b_2) \nonumber \\
&= (-1)^{b_1a_2+a_1b_2} \mathsf{W}(a_2,b_2)\mathsf{W}(a_1,b_1)
. \label{eq:comm2}
\end{align}

For any matrix $\mathsf{T} = \sum_{i,j=0}^{1} t_{ij} |i\rangle\langle j|$ on $\cH$,
we define the vector $|\mathsf{T}\rangle$ in $\cH\otimes \cH$ by
\begin{align}
|\mathsf{T}\rangle \coloneqq \sum_{i,j=0}^{1} t_{ij} |i\rangle \otimes |j\rangle.
\end{align}
With this notation, the maximally entangled state is written as $|\Phi\rangle = (1/\sqrt{2})|\mathsf{I}\rangle$.
Since $\mathsf{T}^{\top} = \sum_{i,j=0}^{1} t_{ij} |j\rangle\langle i|$,
% where $\mathsf{I}$ is the identity matrix on $\cH$.
it holds $|\mathsf{T}\rangle = (\mathsf{T}\otimes \mathsf{I}) |\mathsf{I}\rangle = (\mathsf{I}\otimes \mathsf{T}^{\top}) |\mathsf{I}\rangle$.
Moreover, for any unitaries $\mathsf{U},\mathsf{V}$ on $\cH$,
\begin{align}
(\mathsf{U}\otimes \mathsf{V}) |\mathsf{T}\rangle &= 
% (\mathsf{U}\mathsf{T}\otimes \mathsf{V})|\mathsf{I}\rangle = (\mathsf{U}\mathsf{T}\mathsf{V}^{\top}\otimes \mathsf{I})|\mathsf{I}\rangle = 
|\mathsf{U}\mathsf{T}\mathsf{V}^{\top}\rangle,  \\
(\mathsf{U}\otimes \overline{\mathsf{U}}) |\mathsf{I}\rangle &= |\mathsf{U}\mathsf{U}^\dagger\rangle = |\mathsf{I}\rangle. \label{eq:uuI}
\end{align}
For the maximally entangled state $|\Phi\rangle$ on $\cH\otimes \cH$, the Pauli operation $\mathsf{W}(a,b)$ on the first (second) qubit can be translated to the operation $\mathsf{W}(a,b)$ on the second (first) qubit because
\begin{align}
 \lefteqn{(\mathsf{I}\otimes \mathsf{W}(a,b))|\Phi\rangle = (\mathsf{W}(a,b)^{\top} \otimes \mathsf{I})|\Phi\rangle} \nonumber  \\
&= (\mathsf{W}(a,b)^{\dagger} \otimes \mathsf{I})|\Phi\rangle
= (-1)^{ab}(\mathsf{W}(a,b) \otimes \mathsf{I})|\Phi\rangle.  \label{eq:exchange}
% (\mathsf{I}\otimes \mathsf{W}(a,b))|\Phi\rangle &= (-1)^{ab} 
\end{align}

The following proposition is a case of \cite[Proposition III.1]{SH19} for qubits.
\begin{prop}   \label{prop:pvm}
The set 
$$\px*{ (\mathsf{I}\otimes \mathsf{W}(a,b))|\Phi \rangle
        =   \frac{1}{\sqrt{2}}\sum_{r=0}^{1} (-1)^{rb} | r, r+a \rangle
        \ \Bigg|\ a,b\in\mathbb{Z}_2 }$$
is an orthonormal basis of $\cH\otimes \cH$.
\end{prop}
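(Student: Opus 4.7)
The plan is to verify the explicit formula first, then prove orthonormality by a direct inner-product computation, and finally invoke a dimension-count argument.

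First, I would establish the equation $(\mathsf{I}\otimes \mathsf{W}(a,b))|\Phi\rangle = \frac{1}{\sqrt{2}}\sum_{r=0}^{1}(-1)^{rb}|r,r+a\rangle$ by a straightforward expansion. Since $\mathsf{Z}^{b}|r\rangle = (-1)^{rb}|r\rangle$ and $\mathsf{X}^{a}|r\rangle = |r+a\rangle$, we have $\mathsf{W}(a,b)|r\rangle = (-1)^{rb}|r+a\rangle$, and applying $\mathsf{I}\otimes \mathsf{W}(a,b)$ term by term to the definition of $|\Phi\rangle$ gives the claimed expression.

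Next, I would prove orthonormality by computing $\langle \Phi|(\mathsf{I}\otimes \mathsf{W}(a_1,b_1))^{\dagger}(\mathsf{I}\otimes \mathsf{W}(a_2,b_2))|\Phi\rangle$ for arbitrary $(a_1,b_1),(a_2,b_2)\in\mathbb{Z}_2^2$. Using \eqref{property0} to rewrite $\mathsf{W}(a_1,b_1)^{\dagger}$ as a scalar multiple of $\mathsf{W}(a_1,b_1)$ and then \eqref{eq:sum} to combine the two Pauli operators, this inner product reduces (up to an overall sign) to $\langle \Phi|(\mathsf{I}\otimes \mathsf{W}(a_1+a_2,b_1+b_2))|\Phi\rangle$. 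The main computational step is then the auxiliary identity $\langle \Phi|(\mathsf{I}\otimes \mathsf{W}(a,b))|\Phi\rangle = \delta_{a,0}\delta_{b,0}$, which follows by expanding $|\Phi\rangle = \frac{1}{\sqrt{2}}\sum_{r}|r,r\rangle$ and observing that the sum $\frac{1}{2}\sum_{r}(-1)^{rb}\delta_{0,a}$ vanishes unless $a=b=0$. Substituting this back yields $\delta_{a_1,a_2}\delta_{b_1,b_2}$, establishing orthonormality.

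Finally, since the four vectors indexed by $(a,b)\in\mathbb{Z}_2^2$ are mutually orthonormal in $\cH\otimes\cH$, which has dimension $4$, they automatically span the space and form an orthonormal basis. I do not expect any significant obstacle here: the only non-trivial step is the single inner-product calculation above, and everything else follows from the commutation and adjoint relations already proved in Section \ref{sec:prelim}.
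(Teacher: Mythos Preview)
Your proof is correct and entirely standard. The paper itself does not prove this proposition but simply cites it as the qubit case of \cite[Proposition~III.1]{SH19}, so your direct verification supplies a self-contained argument that the paper omits.
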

From Proposition \ref{prop:pvm}, we can define the projection-valued measure (PVM)
\begin{align}
\mathbf{M}_{\mathbb{Z}_2^2} &\coloneqq \{  M_{(a,b)} 
         \ | \  a,b\in\mathbb{Z}_2 \} \label{M2}
         ,
\end{align}
where $M_{(a,b)} \coloneqq (\mathsf{I}\otimes \mathsf{W}(a,b)) |\Phi \rangle
	\langle \Phi | (\mathsf{I}\otimes \mathsf{W}(a,b)^{\dagger})       $.  
% which is used in the following protocols.

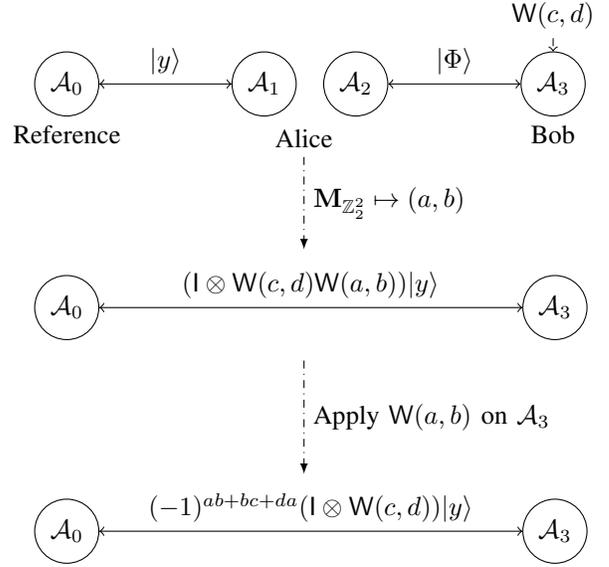
\begin{figure}[t]
\begin{center}
\begin{tikzpicture}[node distance = 3.3em, every text node part/.style={align=center}, auto, scale=0.7]
    % Place nodes
    \node [draw, circle] (channel) {$\cH_0$};
    \node [below=0em of channel] (system) {Reference};
    \node [draw, circle, right=5em of channel] (cha) {$\cH_1$};
    \node [below right=0.5em and -0.8em of cha] (alice) {Alice};
    \node [draw, circle, right=1em of cha] (b1) {$\cH_2$};
    \node [draw, circle, right=5em of b1] (b2) {$\cH_3$};
    \node [below=0em of b2] (system) {Bob};
    
    % Draw edges
    \path [line, <->] (channel) --node {$|y\rangle$} (cha);
    \path [line, <->] (b1) --node {$|\Phi\rangle$} (b2);
%     \path [line] (decoder) --node {$\mathcal{D}(\sigma)$} (5,0);

    \node [above=0.5em of b2] (op3) {$\mathsf{W}(c,d)$};
    \path [line, ->,dashed] (op3) -- (b2);

    \node [below=3.5em of alice] (measure) {};
    \path [line, =>,dashdotted] (alice) --node {$\mathbf{M}_{\mathbb{Z}_2^2}\mapsto (a,b)$} (measure);

    \node [draw, circle, below=6em of channel] (channel2) {$\cH_0$};
%     \node [below=0em of channel2] (system2) {$\mathtt{serv}_1$//($H_3$)};
%     \node [right=5em of channel2] (cha2) {};
%     \node [below right=0em and -0.8em of cha2] (system2) {$\mathtt{serv}_2$//($H_1$)};
%     \node [right=1em of cha2] (b12) {};
    \node [draw, circle, right=16em of channel2] (b22) {$\cH_3$};
%     \node [below=0em of b2] (system2) {$\mathtt{serv}_3$//($H_2$)};
    
    % Draw edges
    \path [line, <->] (channel2) --node {$(\mathsf{I}\otimes \mathsf{W}(c,d)\mathsf{W}(a,b))|y\rangle$} (b22);
%     \path [line, <->] (b12) --node {$|\Phi\rangle$} (b22);

    \node [below=7em of alice] (x) {};
    \node [below=12em of alice] (step2) {};
    \path [line, =>,dashdotted] (x) --node {Apply $\mathsf{W}(a,b)$ on $\cH_3$} (step2);

    \node [draw, circle, below=6em of channel2] (channel3) {$\cH_0$};
%     \node [below=0em of channel2] (system2) {$\mathtt{serv}_1$//($H_3$)};
%     \node [right=5em of channel2] (cha2) {};
%     \node [below right=0em and -0.8em of cha2] (system2) {$\mathtt{serv}_2$//($H_1$)};
%     \node [right=1em of cha2] (b12) {};
    \node [draw, circle, right=16em of channel3] (b23) {$\cH_3$};
%     \node [below=0em of b2] (system2) {$\mathtt{serv}_3$//($H_2$)};
    
    % Draw edges
    \path [line, <->] (channel3) --node {$(-1)^{ab+bc+da} (\mathsf{I}\otimes \mathsf{W}(c,d)) |y\rangle$} (b23);
\end{tikzpicture}
\caption{Change of states in the quantum teleportation protocol with an operation $\mathsf{W}(c,d)$ on $\cH_3$ (Protocol \ref{prot:qtr}).
The symbol $\mathbf{M}_{\mathbb{Z}_2^2}\mapsto (a,b)$ implies that the PVM $\mathbf{M}_{\mathbb{Z}_2^2}$ is applied on $\cH_1\otimes \cH_2$ and the measurement outcome is $(a,b)\in\mathbb{Z}_2^2$.
}
\label{fig:qtrevised}
\end{center}
\end{figure}

\subsection{Quantum Teleportation with an Operation}
First, we give a modified version of the quantum teleportation protocol \cite{BBCJPW93},
where an operation $\mathsf{W}(c,d)$ is performed on $\cH_3$ before the quantum teleportation protocol starts.

% Consider the following protocol where $\mathsf{W}(c,d)$ is performed on $\cH_3$ before the quantum teleportation protocol starts.
\begin{prot}    \label{prot:qtr}
Suppose that Alice possesses two qubits $\cH_1$ and $\cH_2$,
Bob possesses a qubit $\cH_3$.
The state on $\cH_1$ is $\rho$ and Alice and Bob share $|\Phi\rangle \in \cH_2\otimes \cH_3$.
% Consider sending any state $\rho$ on $\cH_1$ from Alice to Bob with preserving the entanglement of $\rho$.
% This task can be done by local quantum operations and classical transmission of two bits.
Quantum teleportation protocol with an operation is given as follows.
\begin{enumerate}[Step 1]
\item Bob applies the unitary operation $\mathsf{W}(c,d)$ on $\cH_3$.
% \item Alice and Bob applies the quantum teleportation protocol.
\item Alice applies PVM $\mathbf{M}_{\mathbb{Z}_2^2}$ on $\cH_1\otimes \cH_2$ and sends the measurement outcome $(a,b)$ to Bob.
\item Bob applies the unitary $\mathsf{W}(a,b)$ on $\cH_3$.
\end{enumerate}
\end{prot}
The resultant state on $\cH_3$ is $\mathsf{W}(c,d) \rho\mathsf{W}(c,d)^{\dagger}$ and it preserves the entanglement.
Note that Protocol \ref{prot:qtr} requires two-bit transmission from Alice to Bob.
The protocol without Step 1 in Protocol \ref{prot:qtr} is the quantum teleportation protocol \cite{BBCJPW93}.

\subsubsection{Analysis of Protocol \ref{prot:qtr}}
We show that the resultant state on $\cH_3$ is $\mathsf{W}(c,d) \rho\mathsf{W}(c,d)^{\dagger}$ and it preserves the entanglement (see Fig.~\ref{fig:qtrevised}).

Let $\cH_0$ be a qubit
and $|y\rangle = \sum_{i,j=0}^{1} y_{ij}|i,j\rangle\in\cH_0\otimes \cH_1$ be a purification of the state $\rho$.
% In the following, we show that $|y\rangle$ is transmitted from $\cH_0\otimes\cH_1$ to $\cH_0\otimes\cH_3$, which implies that the state $\rho$ is transmitted with entanglement.
Before the protocol starts, the state on $\cH_0\otimes \cH_1\otimes \cH_2\otimes \cH_3$ is 
\begin{align}
|z\rangle \coloneqq \frac{1}{\sqrt{2}}\sum_{i,j,r=0}^{1} y_{ij} | i,j,r,r \rangle.     \label{eq:z}
\end{align}
% After Step 0, the state on $\cH_0\otimes \cH_1\otimes \cH_2\otimes \cH_3$ is 
% % \begin{align}
% % |z\rangle \coloneqq \frac{1}{\sqrt{2}}\sum_{i,j,r=0}^{1}  y_{ij} | i,j,r,r \rangle.
% % \end{align}
% \begin{align}
% |z'\rangle &\coloneqq  (\mathsf{I}_{\cH_0\otimes\cH_1\otimes\cH_2} \otimes \mathsf{W}(c,d)) |z\rangle   \\
%             &= \frac{1}{\sqrt{2}}\sum_{i,j,r=0}^{1}  (-1)^{rd} y_{ij} | i,j,r,r+c \rangle,
% \end{align}
% % where $|z\rangle$ is defined in \eqref{eq:z}
If the measurement outcome is $(a,b)$ in Step 2, 
	the state on $\cH_0\otimes \cH_3$ at the end of Step 2 is 
\begin{align}
 & 2 (\mathsf{I}_{\cH_0}\otimes [ \langle \Phi | (  \mathsf{I}_{\cH_1} \otimes \mathsf{W}(a,b) )^{\dagger}] \otimes \mathsf{W}(c,d)) |z\rangle  \label{eq:afterStep1}\\
 &= 2  (\mathsf{I}_{\cH_0} \otimes \langle \Phi| \otimes \mathsf{I}_{\cH_3}) \nonumber \\
	  &\quad\enskip \cdot (\mathsf{I}_{\cH_0}\otimes  \mathsf{I}_{\cH_1} \otimes \mathsf{W}(a,b)^{\dagger} \otimes \mathsf{W}(c,d)) |z\rangle  \nonumber\\
 &= 2 (-1)^{ab} (\mathsf{I}_{\cH_0} \otimes \langle \Phi| \otimes \mathsf{I}_{\cH_3}) \nonumber \\
	  &\quad\enskip \cdot (\mathsf{I}_{\cH_0}\otimes  \mathsf{I}_{\cH_1} \otimes \mathsf{W}(a,b) \otimes \mathsf{W}(c,d)) |z\rangle \label{eq:Comment..} \\
 %& 2 \cdot (\mathsf{I}_{\cH_0}\otimes \langle \Phi | (  \mathsf{I}_{\cH_1} \otimes \mathsf{W}(a,b) )^{\dagger}\otimes \mathsf{W}(c,d))  |z\rangle  \label{eq:afterStep1}\\
 &= \sum_{i,j=0}^{1} y_{ij}  (-1)^{jb+jd+ad} | i, j+a+c \rangle \nonumber \\
 &= (-1)^{ad}\sum_{i,j=0}^{1} y_{ij}  (-1)^{j(b+d)} | i, j+a+c \rangle \nonumber\\
 &= (-1)^{ad} (\mathsf{I}_{\cH_0}\otimes \mathsf{W}(a+c,b+d)) |y\rangle \nonumber\\
 &= (\mathsf{I}_{\cH_0}\otimes \mathsf{W}(c,d)\mathsf{W}(a,b)) |y\rangle, \label{eq:before_recovery}
\end{align}
where the multiplicand $2$ in \eqref{eq:afterStep1} is the normalizing multiplicand,
	\eqref{eq:Comment..} is from \eqref{property0},
	and \eqref{eq:before_recovery} is from \eqref{eq:sum}.
	%the square root of the probability for measuring $(a,b)\in\mathbb{Z}_2^2$.
At the end of Step 3, the state on $\cH_0\otimes \cH_3$ is 
\begin{align}
%     \lefteqn{(-1)^{ad+bc+}  \sum_{i,j=0}^{1} y_{ij}  (-1)^{jd} | i, j+c \rangle }   \\
%  &= 
&(\mathsf{I}_{\cH_0}\otimes \mathsf{W}(a,b)\mathsf{W}(c,d)\mathsf{W}(a,b)) |y\rangle, \nonumber \\
&= (-1)^{bc+ad} (\mathsf{I}_{\cH_0}\otimes \mathsf{W}(c,d) \mathsf{W}(a,b)^2) |y\rangle, \label{last2}\\
&= (-1)^{ab+bc+ad} (\mathsf{I}_{\cH_0}\otimes \mathsf{W}(c,d)) |y\rangle,  \label{last3}\\
&= (-1)^{ab+bc+ad}  (\mathsf{I}_{\cH_0}\otimes \mathsf{W}(c,d)) |y\rangle,    \label{eq:after_recovery}
\end{align}
where \eqref{last2} is from \eqref{eq:comm2}
	and \eqref{last3} is from \eqref{eq:sum}.
Eq.~\eqref{eq:after_recovery} is an identical state to $(\mathsf{I}_{\cH_0}\otimes \mathsf{W}(c,d)) |y\rangle$.
Therefore, the resultant state on $\cH_3$ is $\mathsf{W}(c,d) \rho\mathsf{W}(c,d)^{\dagger}$ and it preserves the entanglement.

% \noindent\textit{Case 2 (Steps 1,0,2): \qquad}
\begin{remark}
Even in case that the order of Step 1 and Step 2 is reversed, 
% i.e., when $\mathsf{W}(c,d)$ is applied after the PVM and before the recovery operation $\mathsf{W}(-a,b)$, 
the state before and after the operation $\mathsf{W}(a,b)$ is identical to \eqref{eq:before_recovery} and \eqref{eq:after_recovery}.
\end{remark}

\subsection{Two-Sum Transmission Protocol}
Consider there are three parties Alice, Bob, and Carol.
By the following protocol, Carol receives the sum of Alice's information $(a,b)\in\mathbb{Z}_2^2$ and Bob's information $(c,d)\in\mathbb{Z}_2^2$.

\begin{prot} \label{prot:add}
Suppose that the joint state of two qubits $\cH_1$ and $\cH_2$ is the maximally entangled state $|\Phi\rangle$
	and Alice and Bob possess $\cH_1$ and $\cH_2$, respectively.
The two-sum transmission protocol is given as follows.
\begin{enumerate}[Step 1]
% \item Let Alice and Bob share the maximally entangled state $|\Phi\rangle\in\cH_1\otimes\cH_2$.
\item Alice and Bob apply ${\mathsf{W}(a,b)}$ on $\cH_1$ and $\mathsf{W}(c,d)$ on $\cH_2$, respectively.
\item Alice and Bob send the quantum systems $\cH_1$ and $\cH_2$ to Carol, respectively.
\item Carol performs the PVM $\mathbf{M}_{\mathbb{Z}_\ell^2}$ and obtains the measurement outcome $(e,f)$ as the protocol output.
\end{enumerate}
\end{prot}
In Protocol \ref{prot:add}, the output $(e,f)$ is $(a+c,b+d)$,
which can be proved trivially from \eqref{eq:sum} and \eqref{eq:exchange}.
The protocol requires two-qubit transmission each from Alice and Bob.
% in the same way as the QPIR protocol in \cite{SH19}.

\section{Symmetric QPIR Protocol with $\vN-1$ Colluding Servers} \label{sec:protocol}
In this section, we propose a $(\vN-1)$-private QSPIR protocol that achieves the performance given in Theorem \ref{theo:main2} for any $\vN\geq2$ servers.
% The proposed protocol extends a $(\vN-1)$-colluded CPIR protocol by the idea of the entanglement swapping \cite{BBCJPW93}.
% Our QPIR protocol 
% requires the file size $\vL = 2^{2\ell}$ for any $\ell\ge 1$, 
% keeps the perfect server and user secrecy, and 
% achieves the QPIR rate $\lceil \vN/2 \rceil^{-1}$.
% Moreover, the sequence $\{\Psi^{(\vL_{\ell})}\}_{\ell=1}^{\infty}$ of our protocols for $\vL_{\ell} \coloneqq 2^{2\ell}$
% achieves the negligible upload cost with respect to the download cost.
In our protocol, each server contains the following file set.
Given two arbitrary integers $\ell\geq 1$ and $\vF\geq 2$,
the file set is given by the collection of $2\ell$-bit files $W_1,\ldots,W_{\vF}
\in \mathbb{Z}_2^{2\ell}$.
Each file $W_i$ is denoted by 
$$W_i = (W_i^{(1)},\ldots, W_i^{(\ell)})\in (\mathbb{Z}_2^2)^{\times \ell}.$$

Section \ref{prot3} presents our $(\vN-1)$-private QSPIR protocol with three servers ($\vN = 3$) and $\ell=1$ as the simplest case.
Then, by using Protocol \ref{prot:add} and the idea of the protocol in Section \ref{prot3},
Section \ref{protn} presents our protocol for any $\vN$ servers and any $\ell$.

\subsection{Construction of Protocol for $\vN=3$ and $\ell=1$} \label{prot3}

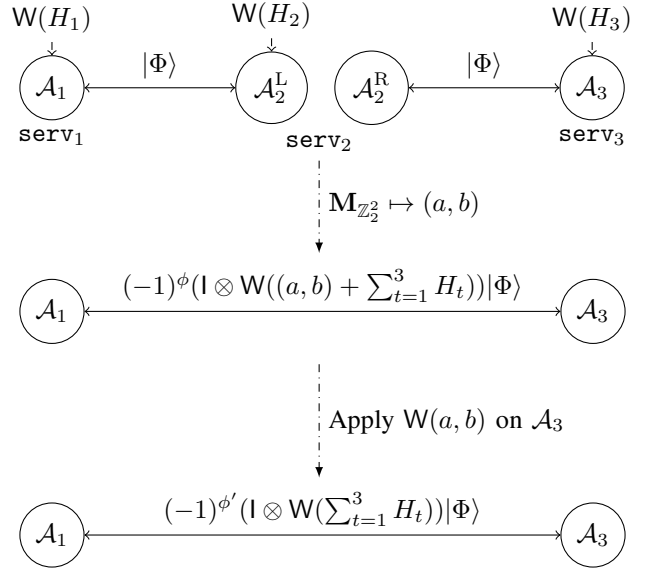
\begin{figure}[t]
\begin{center}
\begin{tikzpicture}[node distance = 3.3em, every text node part/.style={align=center}, auto, scale=0.7]
    % Place nodes
    \node [draw, circle] (channel) {$\cH_1$};
    \node [below=0em of channel] (system) {$\mathtt{serv}_1$};
    \node [draw, circle, right=5.7em of channel] (cha) {$\cH_2^{\mathrm{L}}$};
    \node [below right=0.5em and -0.8em of cha] (alice) {$\mathtt{serv}_2$};
    \node [draw, circle, right=1em of cha] (b1) {$\cH_2^{\mathrm{R}}$};
    \node [draw, circle, right=5.7em of b1] (b2) {$\cH_3$};
    \node [below=0em of b2] (system) {$\mathtt{serv}_3$};
    \node [above=0.5em of channel] (op1) {${\mathsf{W}(H_1)}$};
    \node [above=0.5em of cha] (op2) {$\mathsf{W}(H_2)$};
%     \node [above=0em of cha] (op2) {$\mathsf{W}(H_2)$};
%     \node [above=0em of b1]
    \path [line, ->,dashed] (op1) -- (channel);
    \path [line, ->,dashed] (op2) -- (cha);
    \node [above=0.5em of b2] (op3) {$\mathsf{W}(H_3)$};
    \path [line, ->,dashed] (op3) -- (b2);

    % Draw edges
    \path [line, <->] (channel) --node {$|\Phi\rangle$} (cha);
    \path [line, <->] (b1) --node {$|\Phi\rangle$} (b2);
%     \path [line] (decoder) --node {$\mathcal{D}(\sigma)$} (5,0);

    \node [below=3.5em of alice] (measure) {};
    \path [line, =>,dashdotted] (alice) --node {$\mathbf{M}_{\mathbb{Z}_2^2}\mapsto (a,b)$} (measure);

    \node [draw, circle, below=6em of channel] (channel2) {$\cH_1$};
%     \node [below=0em of channel2] (system2) {$\mathtt{serv}_1$//($H_3$)};
%     \node [right=5em of channel2] (cha2) {};
%     \node [below right=0em and -0.8em of cha2] (system2) {$\mathtt{serv}_2$//($H_1$)};
%     \node [right=1em of cha2] (b12) {};
    \node [draw, circle, right=18em of channel2] (b22) {$\cH_3$};
%     \node [below=0em of b2] (system2) {$\mathtt{serv}_3$//($H_2$)};
   
    % Draw edges
    \path [line, <->] (channel2) --node {$(-1)^{\phi}(\mathsf{I}\otimes\mathsf{W}((a,b)+\sum_{t=1}^3 H_t))|\Phi\rangle$} (b22);
%     \path [line, <->] (b12) --node {$|\Phi\rangle$} (b22);

    \node [below=7em of alice] (x) {};
    \node [below=12em of alice] (step2) {};
    \path [line, =>,dashdotted] (x) --node {Apply $\mathsf{W}(a,b)$ on $\cH_3$} (step2);

    \node [draw, circle, below=6em of channel2] (channel3) {$\cH_1$};
%     \node [below=0em of channel2] (system2) {$\mathtt{serv}_1$//($H_3$)};
%     \node [right=5em of channel2] (cha2) {};
%     \node [below right=0em and -0.8em of cha2] (system2) {$\mathtt{serv}_2$//($H_1$)};
%     \node [right=1em of cha2] (b12) {};
    \node [draw, circle, right=18em of channel3] (b23) {$\cH_3$};
%     \node [below=0em of b2] (system2) {$\mathtt{serv}_3$//($H_2$)};
    
    % Draw edges
    \path [line, <->] (channel3) --node {$(-1)^{\phi'}(\mathsf{I}\otimes\mathsf{W}(\sum_{t=1}^3 H_t))|\Phi\rangle$} (b23);
\end{tikzpicture}
\caption{Two-private QSPIR protocol for three servers and $\ell=1$.
$\mathbf{M}_{\mathbb{Z}_2^2}\mapsto (a,b)$ implies that the PVM $\mathbf{M}_{\mathbb{Z}_2^2}$ is applied on $\cH_2^{\mathrm{L}}\otimes \cH_2^{\mathrm{R}}$ and the measurement outcome is $(a,b)\in\mathbb{Z}_2^2$.
The values $\phi, \phi'\in\mathbb{Z}_2$ are determined by $(a,b)$, $H_1$, $H_2$, and $H_3$.
}
\label{fig:three}
\end{center}
\end{figure}

\subsubsection{Protocol}
Our protocol for three servers each containing the file set $W_1,\ldots,W_\vF \in \mathbb{Z}_2^2$ is described as follows (see Fig.~\ref{fig:three}).
\begin{enumerate}[Step 1]
\item 
% Let $\cH_1$, $\cH_2^{\mathrm{L}}$, $\cH_2^{\mathrm{R}}$, $\cH_{3}$ be qubits.
% The server $\mathtt{serv}_1$ possesses a qubit $\cH_1$,
% $\mathtt{serv}_2$ possesses two qubits $\cH_2^{\mathrm{L}}$ and $\cH_2^{\mathrm{R}}$,
% and $\mathtt{serv}_3$ posesses a qubit $\cH_3$.
The servers $\mathtt{serv}_1$, $\mathtt{serv}_2$, $\mathtt{serv}_3$ possess one qubit $\cH_1$, {two qubits} $\cH_2^{\mathrm{L}}$, $\cH_2^{\mathrm{R}}$, %$\cH_2^{\mathrm{M}}$, and $\cH_2^{\mathrm{M}2}$,
and one qubit $\cH_3$, respectively.
The initial states on both of $\cH_1\otimes \cH_2^{\mathrm{L}}$ and $\cH_2^{\mathrm{R}}\otimes \cH_3$ are the maximally entangled state $|\Phi\rangle$.

\item 
	Let $K$ be the index of the file to be retrieved.
	Choose two subsets $Q_1$ and $Q_2$ of $\{1,\ldots,\vF\}$ independently and uniformly at random.
Define $Q_3$ by
$$
Q_3 \coloneqq Q_1 \oplus Q_2 \oplus \{K\}.
$$
%\myred{where $K$ is the index of the file to be retrieved - compare the description at Section~\ref{sec:problem}.}
For each $t\in\{1,2,3\}$, the user sends the query $Q_t$ to $\mathtt{serv}_t$.

\item For each $t\in\{1,2,3\}$, the server $\mathtt{serv}_t$ calculates
\begin{align}
   H_t \coloneqq \sum_{i\in Q_t} W_i. 
\end{align}
The server $\mathtt{serv}_1$ ($\mathtt{serv}_3$) applies
    ${\mathsf{W}(H_1)}$ to $\cH_1$ 
    ($\mathsf{W}(H_3)$ to $\cH_3$)
    and transmits $\cH_1$ ($\cH_3$) to the user.
The server $\mathtt{serv}_2$ 
applies $\mathsf{W}(H_2)$ on $\cH_2^{\mathrm{L}}$,
performs the PVM 
$\mathbf{M}_{\mathbb{Z}_2^2}$ on $\cH_2^{\mathrm{L}} \otimes \cH_2^{\mathrm{R}}$,
% $$\{ (\mathsf{I} \otimes \mathsf{W}(a,b))|\Phi \rangle  : a,b\in\mathbb{Z}_2 \}$$
and transmits the measurement outcome $(a,b)\in \mathbb{Z}_2^2$ to the user. 
%\myred{Since we only consider quantum communication from the servers to the user,
%	by sending $|(a,b)\rangle\in\cH_2^{\mathrm{M}}\otimes\cH_2^{\mathrm{M}2}$.}

\item 
% The user performs the following process.
The user applies $\mathsf{W}(a,b)$ on $\cH_3$ and
performs the PVM 
$\mathbf{M}_{\mathbb{Z}_2^2}$
% $\{ (\mathsf{I} \otimes \mathsf{W}(c,d))|\Phi \rangle  : c,d\in\mathbb{Z}_2 \}$ 
on $\cH_1 \otimes \cH_3$,
and the output of the protocol is the measurement outcome $\hat{W}_K \in \mathbb{Z}_2^2$.

\end{enumerate}

\subsubsection{Analysis}

First, we show the correctness of the protocol.
The state of $\cH_1\otimes \cH_2^{\mathrm{L}}\otimes \cH_2^{\mathrm{R}} \otimes \cH_3$ before the PVM at Step 3 is 
\begin{align}
	&({\mathsf{W}(H_1)}\otimes \mathsf{W}(H_2) )|\Phi\rangle \otimes 
	(\mathsf{I}\otimes \mathsf{W}(H_3) ) |\Phi\rangle \\
	&=
	(-1)^{\phi_0} ({\mathsf{W}(H_1+H_2)}\otimes \mathsf{I} )|\Phi\rangle \otimes 
	(\mathsf{I}\otimes \mathsf{W}(H_3) ) |\Phi\rangle
\end{align}
by \eqref{eq:sum},
	where $\phi_0\in\mathbb{Z}_2$ is determined depending on $H_1$ and $H_2$.
After the PVM at Step 3 with the measurement outcome $(a,b)$, the state on $\cH_1 \otimes \cH_3$ is 
	\begin{align}
	&\paren*{{\mathsf{W}(H_1+H_2)}\otimes 
		[ \langle \Phi | (  \mathsf{I}_{\cH_1} \otimes \mathsf{W}(a,b) )^{\dagger}] \otimes 
			\mathsf{W}(H_3) } |\Phi\rangle \otimes |\Phi \rangle  \nonumber \\
	&= (-1)^{\phi_{0}'} \paren*{{\mathsf{W}(H_1+H_2)}\otimes  \langle \Phi |  \otimes 
			\mathsf{W}((a,b)+H_3)} |\Phi\rangle \otimes |\Phi \rangle  \nonumber \\
	&=(-1)^{\phi_{0}'} \paren*{\mathsf{W}(H_1+H_2)\otimes \mathsf{W}((a,b)+H_3) } |\Phi\rangle \nonumber  \\
	&=
	(-1)^{\phi_0''} (\mathsf{I}\otimes \mathsf{W}((a,b)+H_1+H_2+H_3) ) |\Phi\rangle ,
	\label{temnMP}
\end{align}
where $\phi_0',\phi_0''\in\mathbb{Z}_2$ are determined by $(a,b), H_1,H_2,H_3$.
Thus, after the user's operation at Step 4, 
	the state on $\cH_1 \otimes \cH_3$ is 
	\begin{align}
	(\mathsf{I}\otimes \mathsf{W}(H_1+H_2+H_3) ) |\Phi\rangle .
	\label{temnP}
	\end{align}
(Alternatively, %comparing Fig.~\ref{fig:qtrevised} and Fig.~\ref{fig:three},
	we can also obtain the same result \eqref{temnP} by considering the servers and the user apply Protocol \ref{prot:qtr} for $(c,d)\coloneqq H_3$ and $|y\rangle \coloneqq ( {\mathsf{W}(H_1)}\otimes \mathsf{W}(H_2) )|\Phi\rangle = (-1)^{\phi_0}(\mathsf{I}\otimes \mathsf{W}(H_1+H_2)) |\Phi\rangle$.)
Therefore, the user obtains the measurement outcome $\hat{W}_K = \sum_{t=1}^3 H_t=W_K$, which implies the correctness of our protocol.

%% Note that the order of the servers in applying operations can be arbitrary since the operations of each server are applied on different quantum systems.
%%Note that the application of ${\mathsf{W}(H_1)}$ and $\mathsf{W}(H_2)$ corresponds to Protocol \ref{prot:add},
%%and 
%the remaining process corresponds to Protocol \ref{prot:qtr} for $(c,d)\coloneqq H_3$ and $|y\rangle \coloneqq ( {\mathsf{W}(H_1)}\otimes \mathsf{W}(H_2) )|\Phi\rangle = (-1)^{\phi_0}(\mathsf{I}\otimes \mathsf{W}(H_1+H_2)) |\Phi\rangle$ where $\phi_0\in\mathbb{Z}_2$ is determined depending on $H_1$ and $H_2$.
%Therefore, 
%% as given in Fig. \ref{fig:three},
%% the state on $\cH_1\otimes \cH_3$ after Step 3 is 
%% $$(-1)^{\phi''}(\mathsf{I}\otimes\mathsf{W}((a,-b)+\sum_{t=1}^3 H_t))|\Phi\rangle,$$
%% where $\phi''\in\mathbb{Z}_2$ is determined by $(a,b), H_1, H_2, H_3$.
%% Moreover, 
%after the operation $\mathsf{W}(a,b)$ in Step 4, 
%the resultant state on $\cH_1\otimes \cH_3$ is
%$$(-1)^{\phi'}\paren*{\mathsf{I}\otimes\mathsf{W}\paren*{\sum_{t=1}^3 H_t}}|\Phi\rangle,$$
%where $\phi'\in\mathbb{Z}_2$ is determined by $(a,b)$, $H_1$, $H_2$, and $H_3$.
%Thus, the measurement outcome in Step 4 is $\sum_{t=1}^3 H_t=W_K$, which implies the correctness of our protocol.

The user secrecy follows from the fact that any two of $H_1,H_2,H_3$ are independent of the query index $K$.
The server secrecy is proved as follows. 
	The information the user obtains is the measurement outcome $(a,b)$ and the state \eqref{temnMP}.
	The measurement outcome $(a,b)$ is uniformly at random because the state on $\cH_2^{\mathrm{L}}\otimes \cH_2^{\mathrm{R}}$ is the completely mixed state before the measurement $\mathbf{M}_{\mathbb{Z}_2^2}$ at Step 3.
	Since the state \eqref{temnMP} only depends on $W_K$ and $(a,b)$, which are jointly independent of the files other than $W_K$, 
		the user obtains no information of the files other than $W_K$.
%		
%		it is enough to show that $(a,b)$ is independent of the files other than $W_K$.
%	Notice that the state on $\cH_2^{\mathrm{L}}\otimes \cH_2^{\mathrm{R}}$ is the completely mixed state.
%		the measurement outcome $(a,b)\in\mathbb{Z}_2^2$ is uniformly at random.}
%	from the fact that the user's information is $W_K, a,b$ which is independent of any file except for $W_K$.

The upload cost is $\vN\vF=3\vF$ bits because each of $Q_1, Q_2, Q_3$ is written by $\vF$ bits.
In the protocol, the user downloads $2$ qubits and $2$ bits %because two qubits $\cH_1$, $\cH_3$ are downloaded and the classical information $(a,b)$ can be transmitted by $2$ qubits.
	but we count the download cost as $4$ qubits 
	since we only count quantum communication in our QPIR model (Section~\ref{sec:problem})
	and 
	one qubit can convey one bit at most.
%	
%	the download cost is $4$ qubits because two qubits $\cH_1$, $\cH_3$ are downloaded and the classical information $(a,b)$ can be transmitted by $2$ qubits.
% four qubits are downloaded which consist of two qubits for $\cH_1, \cH_3$ and two qubits for downloading the measurement outcome $(a,b)$.
The file size is $2\ell=2$ bits.
Therefore, 
% the upload cost is negligible with respect to the download cost, and
the QPIR rate is 
$2/(\vN+1) = 2/4$.
% which is strictly greater than 
% the $(\vN-1)$-private CPIR capacity $1/\vN=1/3$ when any $\vN-1$ servers may collude and the number of files are infinite.

\subsection{Construction of Protocol for $\vN$ Servers} \label{protn}
% Suppose each of $\vN$ servers contains the collection of $2\ell$-bit files $W_1,\ldots,W_{\vF} \in \{0,\ldots, \vL-1\}$.
% % when any $\vN-1$ servers possibly collude.
% The set $\{0,\ldots, \vL_\ell-1\}$ is identified with $(\mathbb{Z}_2^2)^{\times \ell}$ and
% $W_i$ for any $i\in\{1,\ldots,\vF\}$ is denoted by 
% $W_i = (W_i^{(1)},\ldots, W_i^{(\ell)})\in (\mathbb{Z}_2^2)^{\times \ell}$.

In this subsection, we present our protocol for any $\vN\geq 2$ servers and any $\ell\geq 1$.
The idea of our protocol construction is described as follows.
The number of servers $\vN$ are generalized to be arbitrary by using the idea of the three-server protocol in Section \ref{prot3}.
In this generalization, it is necessary for servers to transmit the sum of measurement outcomes to the user,
and it is performed efficiently by using the two-sum transmission protocol (Protocol \ref{prot:add}).
The index $\ell$ is increased by using the same query repetitively until the protocol retrieves the entire file information.

Our protocol for $\vN$ servers is described as follows (see Fig. \ref{fig:protocol}).

%Given two arbitrary integers $\ell\geq 1$ and $\vF\geq 2$,
%the file set is given by the collection of $2\ell$-bit files $W_1,\ldots,W_{\vF}
%\in \mathbb{Z}_2^{2\ell}$.
%Each file $W_i$ is denoted by 
%$W_i = (W_i^{(1)},\ldots, W_i^{(\ell)})\in (\mathbb{Z}_2^2)^{\times \ell}$.

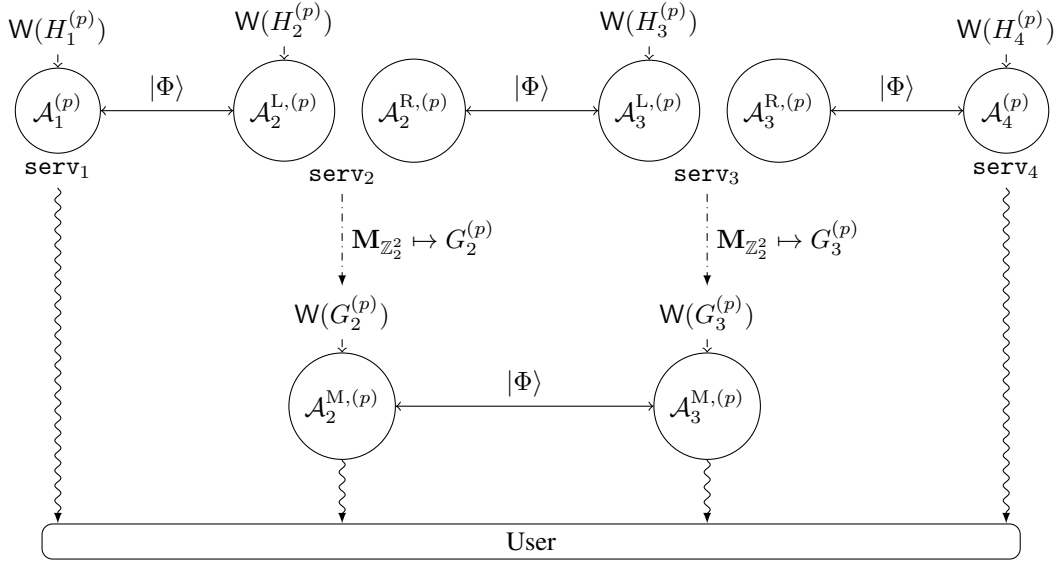
\begin{figure*}[th] 
\begin{center}
\begin{tikzpicture}[node distance = 3.3em, every text node part/.style={align=center}, auto, scale=0.7]
    \node [draw, circle] (channel) {$\cH_1^{(p)}$};
    \node [below=0em of channel] (system) {$\mathtt{serv}_1$};
    \node [above= 0.5em of channel] (H1) {${\mathsf{W}(H_1^{(p)})}$};

    \node [draw, circle, right=5em of channel] (cha) {$\cH_{2}^{\mathrm{L},(p)}$};
    \node [above= 0.5em of cha] (H2) {$\mathsf{W}(H_2^{(p)})$};
    \node [below right=0.5em and -0.8em of cha] (alab) {$\mathtt{serv}_2$};
    \node [draw, circle, right=1em of cha] (b1) {$\cH_{2}^{\mathrm{R},(p)}$};
    \node [draw, circle, below= 6em of alab] (b0) {$\cH_{2}^{\mathrm{M},(p)}$};
    \node [above= 0.5em of b0] (G2) {${\mathsf{W}(G_2^{(p)})}$};
    \path [line, =>,dashdotted] (alab) --node {$\mathbf{M}_{\mathbb{Z}_2^2} \mapsto G_2^{(p)}$} (G2);

    \node [draw, circle, right=5em of b1] (b2) {$\cH_{3}^{\mathrm{L},(p)}$};
    \node [above= 0.5em of b2] (H3) {$\mathsf{W}(H_3^{(p)})$};
    \node [below right=0.5em and -0.8em of b2] (blab) {$\mathtt{serv}_3$};
    \node [draw, circle, right=1em of b2] (c1) {$\cH_{3}^{\mathrm{R},(p)}$};
    \node [above= 1em of c1] (H3p) {};
    \node [draw, circle, below=6em of blab] (c0) {$\cH_{3}^{\mathrm{M},(p)}$};
    \node [above= 0.5em of c0] (G3) {$\mathsf{W}(G_3^{(p)})$};
    \path [line, =>,dashdotted] (blab) --node {$\mathbf{M}_{\mathbb{Z}_2^2} \mapsto G_3^{(p)}$} (G3);

    \node [draw, circle, right=5em of c1] (c2) {$\cH_{4}^{(p)}$};
    \node [below=0em of c2] (clab) {$\mathtt{serv}_4$};
    \node [above= 0.5em of c2] (H4) {${\mathsf{W}(H_4^{(p)})}$};

    \node [draw, block, minimum width = 37em, below left=3em and -13.3em of c0]  (user) {User};
    
    \path [line,snake arrow] (system.south) -- (user.north -| channel);
    \path [line,snake arrow] (clab.south) -- (user.north -| c2);
    \path [line,snake arrow] (b0.south) -- (user.north -| b0);
    \path [line,snake arrow] (c0.south) -- (user.north -| c0);
    
    % Draw edges
    \path [line, <->] (channel) --node {$|\Phi\rangle$} (cha);
    \path [line, <->] (b1) --node {$|\Phi\rangle$} (b2);
    \path [line, <->] (c1) --node {$|\Phi\rangle$} (c2);
    \path [line, <->] (b0) --node {$|\Phi\rangle$} (c0);
    
    \path [line,->,dashed] (H1) -- (channel);
    \path [line,->,dashed] (H4) -- (c2);
    \path [line,->,dashed] (H2) -- (cha);
    \path [line,->,dashed] (H3) -- (b2);
    \path [line,->,dashed] (G2) -- (b0);
    \path [line,->,dashed] (G3) -- (c0);
    
%     \path [line] (decoder) --node {$\mathcal{D}(\sigma)$} (5,0);
\end{tikzpicture}
\caption{Download step of our $(\vN-1)$-private QSPIR protocol for four servers and any integer $1\leq p \leq \ell$.
For any $t\in\{2,3\}$,
$\mathbf{M}_{\mathbb{Z}_2^2}\mapsto G_t^{(p)}$ implies that the PVM $\mathbf{M}_{\mathbb{Z}_2^2}$ is applied on $\cH_{t}^{\mathrm{L},(p)}\otimes \cH_t^{\mathrm{R},(p)}$ and the measurement outcome is $G_t^{(p)}\in\mathbb{Z}_2^2$.
The snake shape arrow indicates the transmission of a qubit. 
}
\label{fig:protocol}
\end{center}
\end{figure*}

\subsubsection{Preparation}
For each $p\in\{1,\ldots,\ell\}$, prepare the following quantum systems and states.
The servers $\mathtt{serv}_1$ and $\mathtt{serv}_{\vN}$ have qubits $\cH_1^{(p)}$ and $\cH_{\vN}^{(p)}$, respectively.
For each $t\in\{2,\ldots,\vN-1\}$, the server $\mathtt{serv}_t$ has three qubits $\cH_{t}^{\mathrm{L},(p)}$, $\cH_{t}^{\mathrm{R},(p)}$, $\cH_{t}^{\mathrm{M},(p)}$.
If $\vN$ is odd, we consider the server $\mathtt{serv}_{\vN-1}$ has only two qubits $\cH_{\vN-1}^{\mathrm{L},(p)}$, $\cH_{\vN-1}^{\mathrm{R},(p)}$.
The maximally entangled state $|\Phi\rangle$ is shared between 
each of pairs $(\cH_1, \cH_{2}^{\mathrm{L},(p)})$, $(\cH_2^{\mathrm{R},(p)}, \cH_{3}^{\mathrm{L},(p)})$, $(\cH_3^{\mathrm{R},(p)}, \cH_{4}^{\mathrm{L},(p)})$, \ldots\ , $(\cH_{\vN-2}^{\mathrm{R},(p)}, \cH_{\vN-1}^{\mathrm{L},(p)})$, $(\cH_{\vN-1}^{\mathrm{R},(p)}, \cH_{\vN})$
and $(\cH_{2j}^{\mathrm{M},(p)}, \cH_{2j+1}^{\mathrm{M},(p)})$ for any $j\in\{1,\ldots,\lfloor \vN/2\rfloor-1\}$.

\subsubsection{Upload Step}

Let $K$ be the index of the file to be retrieved.
Choose subsets $Q_1, \ldots, Q_{\vN-1}$ of $\{1,\ldots,\vF\}$ independently and uniformly at random.
Define $Q_{\vN}$ by
$$
Q_{\vN} \coloneqq \bigoplus_{t=1}^{\vN-1} Q_t \oplus \{K\}.
$$
The user sends the query $Q_t$ to $\mathtt{serv}_t$ for each $t\in\{1,\ldots,\vN\}$.

\subsubsection{Download Step}
For each $t\in\{1,\ldots,\vN\}$,
depending on the query $Q_t$, the server $\mathtt{serv}_t$ calculates
\begin{align}
   \lefteqn{H_t = (H_t^{(1)},\ldots,H_t^{(\ell)})}\nonumber \\ 
 &\coloneqq \sum_{i\in Q_t} W_i = ( \sum_{i\in Q_t} W_i^{(1)} , \ldots, \sum_{i\in Q_t} W_i^{(\ell)} ).
% H_i = \sum_{i\in Q_i} W_i.
\end{align}
Then, for each $p\in\{1,\ldots,\ell\}$, the servers perform the following process.
\begin{enumerate}[a)]
\item The server $\mathtt{serv}_1$ ($\mathtt{serv}_{\vN}$) applies
    ${\mathsf{W}(H_1^{(p)})}$ to $\cH_1$ ($\mathsf{W}(H_{\vN}^{(p)})$ to $\cH_{\vN}^{(p)}$)
    and transmits $\cH_1^{(p)}$ ($\cH_{\vN}^{(p)}$) to the user.
 
\item For each $t\in\{2,\ldots,\vN-1\}$, 
the server $\mathtt{serv}_t$ applies $\mathsf{W}(H_t^{(p)})$ on $\cH_t^{\mathrm{L},(p)}$ and
performs the PVM 
$\mathbf{M}_{\mathbb{Z}_2^2}$ on $\cH_t^{\mathrm{L},(p)} \otimes \cH_t^{\mathrm{R},(p)}$
% = \{ M_{(a,b)} = (\mathsf{I} \otimes \mathsf{W}(a,b))|\Phi \rangle  : a,b\in\mathbb{Z}_2 \}$$ on $\cH_t^{\mathrm{L},(p)} \otimes \cH_t^{\mathrm{R},(p)}$ 
whose measurement outcome is denoted by $G_t^{(p)}\in\mathbb{Z}_2^2$.

\item 
For each $j\in\{1,\ldots,\lfloor \vN/2 \rfloor-1\}$,
the servers $\mathtt{serv}_{2j}$ and $\mathtt{serv}_{2j+1}$ transmit the sum $G_{2j}^{(p)}+G_{2j+1}^{(p)}$ to the user 
by the two-sum transmission protocol (Protocol \ref{prot:add}) with qubits $\cH_{2j}^{\mathrm{M},(p)}$ and $\cH_{2j+1}^{\mathrm{M},(p)}$. 

\item If $\vN$ is odd, $\mathtt{serv}_{\vN-1}$ transmits $G_{\vN-1}^{(p)}\in\mathbb{Z}_2^2$ to the user.
\end{enumerate}
\subsubsection{Retrieval Step}
For each $p\in\{1,\ldots,\ell\}$, the user performs the following process.
\begin{enumerate}[a)]
\item 
For any $j\in\{1,\ldots,\lfloor \vN/2 \rfloor-1\}$, the user receives the sum $G_{2j}^{(p)}+G_{2j+1}^{(p)}$
by Download Step c). 
% For any $j\in\{1,\ldots,\lfloor \vN/2 \rfloor-1\}$,
% the user performs the measurement $\mathbf{M}_{\mathbb{Z}_2^2}$ on $\cH_{2j}^{\mathrm{M},(p)}\otimes \cH_{2j+1}^{\mathrm{M},(p)}$
% whose measurement outcome is denoted by $F_j^{(p)}\in\mathbb{Z}_2^2$.
% % $G_{2j}+G_{2j+1}$. % = (g_{2j}^{1}+g_{2j+1}^{1},g_{2j}^{2}+g_{2j+1}^{2}).$$
If $\vN$ is odd, the user receives $G_{\vN-1}^{(p)}$ additionally.
	%is obtained from the state $|G_{\vN-1}^{(p)}\rangle$.
% % If $\vN$ is odd, let $F_{\lfloor\vN/2\rfloor} \coloneqq G_{\vN-1}$ which can be obtained from the state $|G_{\vN-1}\rangle$.

\item 
% Let $\sum_{t=2}^{\vN-1} G_t^{(p)} = (f_1^{(p)},f_2^{(p)})$. 
The user applies $\mathsf{W}(\sum_{t=2}^{\vN-1} G_{t}^{(p)})$ on $\cH_{\vN}^{(p)}$.

\item
The user performs the PVM $\mathbf{M}_{\mathbb{Z}_2^2}$ on $\cH_1^{(p)} \otimes \cH_{\vN}^{(p)}$
whose measurement outcome is denoted by $\hat{W}_K^{(p)}\in\mathbb{Z}_2^2$.
% The output of the protocol is the measurement outcome $(a,b)\in\mathbb{Z}_2^2$.
\end{enumerate}
The protocol output is $\hat{W}_K = (\hat{W}_K^{(1)},\ldots, \hat{W}_K^{(\ell)})\in (\mathbb{Z}_2^2)^{\times \ell}$.

\subsection{Analysis of Protocol for $\vN$ servers}
In this subsection, we prove the correctness and the secrecy of the protocol, and analyze the costs and rate of the protocol.

\subsubsection{Correctness}
Let $p$ be any element of $\{1,\ldots,\ell\}$.
%For any $t\in\{2,\ldots,\vN-1\}$,
%define $G_t^{(p)} \coloneqq (g_{t,1}^{(p)}, g_{t,2}^{(p)})$ from the measurement outcome $G_t^{(p)} = (g_{t,1}^{(p)}, g_{t,2}^{(p)})$ of $\mathtt{serv}_t$.
% As will be shown from the below paragraph,
As shown in the next paragraph, %Appendix \ref{append:download},
at the end of Download Step, the state on  $\cH_1^{(p)}\otimes\cH_{\vN}^{(p)}$ is 
\begin{align}
(-1)^{\phi_{\vN}^{(p)}} \paren*{\mathsf{I} \otimes \mathsf{W}\paren*{\sum_{t=1}^{\vN} H_t^{(p)} + \sum_{t=2}^{\vN-1} G_t^{(p)}}} |\Phi\rangle,
\label{eq:download}
\end{align}
where $\phi_{\vN}^{(p)}\in\mathbb{Z}_2$ is determined depending on $H_1^{(p)}$, \ldots\ , $H_{\vN}^{(p)}$, $G_2^{(p)}$, \ldots\ , $G_{\vN-1}^{(p)}$.
% At Retrieval Step b), 
% $(-f_1^{(p)},f_2^{(p)}) = - \sum_{i=2}^{\vN-1} G_i^{(p)}$ holds
% which implies that the resultant state on $\cH_1^{(p)}\otimes \cH_{\vN}^{(p)}$ is 
% \begin{align}
% (-1)^{\phi_{\vN}^{(p)'}} \paren*{\mathsf{I} \otimes \mathsf{W}\paren*{\sum_{t=1}^{\vN} H_t^{(p)}}} |\Phi\rangle,
% \end{align}
% where $\phi_{\vN}^{(p)'}\in\mathbb{Z}_2$ is determined depending on $H_1^{(p)}$, $\ldots$ , $H_{\vN}^{(p)}$, $G_2^{(p)},\ldots,G_{\vN-1}^{(p)}$.
% 
Then, at the end of Retrieval Step b),
the state on $\cH_1^{(p)}\otimes \cH_{\vN}^{(p)}$ is 
\begin{align}
(-1)^{\widetilde{\phi}_{\vN}^{(p)}} \paren*{\mathsf{I} \otimes \mathsf{W}\paren*{\sum_{t=1}^{\vN} H_t^{(p)}}} |\Phi\rangle,
\end{align}
where $\widetilde{\phi}_{\vN}^{(p)}\in\mathbb{Z}_2$ is determined depending on $H_1^{(p)}$, \ldots\ , $H_{\vN}^{(p)}$, $G_2^{(p)}$, \ldots\ , $G_{\vN-1}^{(p)}$.
Thus, 
at Retrieval Step c), 
the measurement outcome is $\hat{W}_K^{(p)} = \sum_{t=1}^{\vN} H_t^{(p)} = W_K^{(p)}\in\mathbb{Z}_2^2$.
which implies that our protocol correctly retrieves $W_K^{(p)}$.
Since $W_K^{(p)}$ is retrieved correctly for any $p$,
	the queried file $W_K = (W_K^{(1)},\ldots, W_K^{(\ell)})$ is retrieved correctly.

% In the following, we show Eq. \eqref{eq:download}.
Now, we prove \eqref{eq:download}.
Since the operations of different servers are applied on different quantum systems, 
the order of the servers' operations can be arbitrary.
Therefore, in the following, we consider that the servers $\mathtt{serv}_1,\ldots,\mathtt{serv}_{\vN}$ apply the operations sequentially.
At the end of the operation of $\mathtt{serv}_1$,
the state on $\cH_1^{(p)}\otimes \cH_{2}^{\mathrm{L},(p)}$ is
\begin{align}
|y_1\rangle\! \coloneqq \!(\mathsf{W}(H_1^{(p)}) \!\otimes\! \mathsf{I}) |\Phi\rangle
=\! (-1)^{\phi_1^{(p)}} 
    \!(\mathsf{I} \otimes \mathsf{W}(H_1^{(p)})) |\Phi\rangle,
\end{align}
where $\phi_1^{(p)}$ is determined depending on $H_1^{(p)}$.
Suppose that at the end of the operations of $\mathtt{serv}_k$ for any $k\in \{1,\ldots,\vN-2\}$,
the state on $\cH_1\otimes \cH_{k+1}^{\mathrm{L},(p)}$ is
\begin{align*}
|y_k\rangle \coloneqq (-1)^{\phi_k^{(p)}} \paren*{\mathsf{I} \otimes \mathsf{W}\paren*{\sum_{t=1}^{k} H_t^{(p)} + \sum_{t=2}^{k} G_t^{(p)}}} |\Phi\rangle,
\end{align*}
where $\phi_k^{(p)}\in\mathbb{Z}_2$ is determined depending on $H_1^{(p)}$, \ldots\ , $H_k^{(p)}$, $G_2^{(p)}$, \ldots\ , $G_{k}^{(p)}$.
Note that the operations of $\mathtt{serv}_{k+1}$ corresponds to the steps 0 and 1 of Protocol \ref{prot:qtr} for $|y\rangle\coloneqq|y_{k}\rangle$, $(a,b)\coloneqq G_{k+1}^{(p)}$, and $(c,d)\coloneqq H_{k+1}^{(p)}$.
Therefore, after the operations of $\mathtt{serv}_{k+1}$,
the state on $\cH_1^{(p)}\otimes \cH_{k+2}^{\mathrm{L},(p)}$ is 
\begin{align*}
|y_{k+1}\rangle\coloneqq(-1)^{\phi_{k+1}^{(p)}} \!\paren*{\mathsf{I} \!\otimes\! \mathsf{W}\paren*{\sum_{t=1}^{k+1} H_t^{(p)} \!+ \sum_{t=2}^{k+1} G_t^{(p)}\!}\!} |\Phi\rangle,
\end{align*}
where $\phi_{k+1}^{(p)}\in\mathbb{Z}_2$ is determined depending on $H_1^{(p)}$, \ldots\ , $H_{k+1}^{(p)}$, $G_2^{(p)}$, \ldots , $G_{k}^{(p)}$ and
the system $\cH_{k+2}^{\mathrm{L},(p)}$ denotes $\cH_{\vN}^{(p)}$  for the case $k= \vN-2$.
By the mathematical induction, 
the state on $\cH_1^{(p)}\otimes\cH_{\vN}^{(p)}$ after the operations of $\mathtt{serv}_{\vN-1}$ is 
\begin{align*}
|y_{\vN-1}\rangle = (-1)^{\phi_{\vN-1}^{(p)}} \paren*{\mathsf{I} \!\otimes\! \mathsf{W}\paren*{\sum_{t=1}^{\vN-1} H_t^{(p)} \!+ \sum_{t=2}^{\vN-1} G_t^{(p)}\!}\!} |\Phi\rangle,
\end{align*}
and after the operation of $\mathtt{serv}_{\vN}$, the state is 
\begin{align*}
(-1)^{\phi_{\vN}^{(p)}} \paren*{\mathsf{I} \otimes \mathsf{W}\paren*{\sum_{t=1}^{\vN} H_t^{(p)} + \sum_{t=2}^{\vN-1} G_t^{(p)}}} |\Phi\rangle,
\end{align*}
where $\phi_{\vN}^{(p)}\in\mathbb{Z}_2$ is determined depending on $H_1^{(p)}$, \ldots\ , $H_{\vN}^{(p)}$, $G_2^{(p)}$, \ldots\ , $G_{\vN-1}^{(p)}$.
Thus, we have Eq. \eqref{eq:download}.

\subsubsection{Secrecy}

The user secrecy is obtained because 
the collection of any $\vN-1$ variables in $Q_1,\ldots,Q_{\vN}$ is independent of the query index $K$.
Next, we consider the server secrecy.
The user obtains $W_K$ and $G_{2j}^{(p)}+G_{2j+1}^{(p)}$ for any $j\in\{1,\ldots,\lfloor\vN/2\rfloor-1\}$ and any $p\in\{1,\ldots,\ell\}$.
If $\vN$ is odd, the user obtains $G_{\vN-1}^{(p)}$ additionally.
Note that before the measurement of the qubits possessed by the server $\mathtt{serv}_t$ ($t\in\{2,\ldots,\vN-1\}$), 
	the states on $\cH_t^{\mathrm{L},(p)}$ and $\cH_t^{\mathrm{R},(p)}$
	are the completely mixed states, which implies that the measurement outcomes $G_t^{(p)}$ for all $t$ are independent of any file.
% This implies that the distribution of the user's information is independent of any file $W_1,\ldots, W_{\vF}$,
Therefore, the user obtains no file information other than $W_K$.

\subsubsection{Costs and Rate}

The upload cost is $\vN\vF$ bits because each subset $Q_1,\ldots, Q_{\vN}$ of $\{1,\ldots,\vF\}$ is written by $\vF$ bits.
For each $p\in\{1,\ldots,\ell\}$, the user 
	downloads $\vN$ qubits $\cH_1^{(p)}$, $\cH_2^{\mathrm{M},(p)}$, \ldots\ , $\cH_{\vN-1}^{\mathrm{M},(p)}$, $\cH_{\vN}^{(p)}$ if $\vN$ is even, and 
	downloads $\vN-1$ qubits $\cH_1^{(p)}$, $\cH_2^{\mathrm{M},(p)}$, \ldots\ , $\cH_{\vN-2}^{\mathrm{M},(p)}$, $\cH_{\vN}^{(p)}$ and two bits $G_{\vN-1}^{(p)}\in\mathbb{Z}_2^2$ if $\vN$ is odd.
Since we only count quantum communication in our QPIR model
	and
	one qubit can convey one bit at most, 
	the total download cost is $\vN\ell$ qubits when $\vN$ is even
	and $(\vN+1)\ell$ qubits when $\vN$ is odd.
% $4$ qubits because
% four qubits are downloaded which consist of two qubits for $\cH_1, \cH_4$ and two qubits for downloading the measurement outcome $(a,b)$.
The file size is $2\ell$ bits, i.e., $\mathsf{m} = 2^{2\ell}$.
Therefore, the QPIR rate is 
\begin{align}
R(\qprot{\vL}) =
\begin{cases}
\displaystyle
\frac{2\ell}{\vN \ell}  = \frac{2}{\vN}  & \text{if $\vN$ is even}\\
\displaystyle
\frac{2\ell}{(\vN+1)\ell}  = \frac{2}{\vN+1} &\text{if $\vN$ is odd}. 
% 2\ell/\vN \ell  = 2/\vN  & \text{if $\vN$ is even}\\
% 2\ell/(\vN+1)\ell  = 2/(\vN+1) &\text{if $\vN$ is odd}. 
\end{cases}
\end{align}
Moreover, 
the sequence $\{\Psi_{\mathrm{QPIR}}^{(\vL_\ell)}\}_{\ell=1}^{\infty}$ of our protocols for $\vL_{\ell} \coloneqq 2^{2\ell}$
achieves the negligible upload cost with respect to the download cost, i.e., 
$$\lim_{\ell\to\infty} \frac{\vN\vF}{\vN\ell} = \lim_{\ell\to\infty} \frac{\vN\vF}{(\vN+1)\ell} = 0.$$

\section{Converse}  \label{sec:converse}
In this section, we prove the converse bound \eqref{H2}
\begin{align}
C_{\mathrm{asymp}}^{\alpha,0,0,\theta} & \le \frac{2}{\vN} 
\label{ineq:converse_im}
\end{align}
for any $\alpha\in [0, 1)$, $\theta\geq 0$.
%Our converse proof gives the strict upper bound for any 

The idea of the converse proof is described as follows.
The converse bound \eqref{H2} is for the case where the QPIR protocol satisfies the user and server secrecy conditions perfectly, i.e., $\beta(\qprot{\vL})=0$ and $\gamma(\qprot{\vL})=0$.
From these perfect secrecies, we prove a lemma that the joint state of colluding servers is independent of the queried file $W_K$.
Then, using the lemma, the state from colluding servers can be used as shared entanglement between the honest server and the user, i.e., the honest server can communicate at most $2c$ bits to the user by sending $c$ qubits, which follows from the entanglement-assisted classical capacity \cite{BSST99}.
Since the user downloads $c \vN$ qubits from $\vN$ servers, the QPIR rate is upper bounded by $2/\vN$, which implies the converse bound \eqref{H2}.
Based on this idea, we give the converse proof in the remainder of this section.

% When $\cH'$ denotes the composite system of any $\vN-1$ systems in $\cH_1,\ldots,\cH_{\vN}$,
% Eq. \eqref{eq:serv_sec} implies
% \begin{align}
% I(W_i;\cH'|K=k)=0 \quad \forall i,k \in\{1,\ldots, \vF\} \text{\enskip s.t. }i\neq k. \label{eq:serv_sec2}
% \end{align}

% Let $s$ be any element of $\{1,\ldots,\vN\}$.
First, we prepare the following lemma.
%we introduce the following notations.
Recall that $Q_t'$ is the collection of queries to all servers other than $\mathtt{serv}_t$ for any $t\in\{1,\ldots,\vN\}$.
We denote by $\cH_t'$ the composite system of all servers other than $\mathtt{serv}_t$
% and the state on $\cH'$.
and by $d_t$ the dimensions of $\cH_t$.

\begin{lemm}\label{L1}
Suppose that $\beta(\qprot{\vL})=0$ and $\gamma(\qprot{\vL})=0$.
Let $\rho_{t|k}'$ be the state on $\cH_t'$ after the server encoder.
Then, the relation $I(W_k;\cH_t')_{\rho_{t|k}'}=0$ holds for any $k \in \{1,\ldots, \vF\}$ 
after the application of the server encoder.
That is, the state on the system $\cH_t'$ does not depend on the file information $W_k$.
%In particular, when $\gamma=0$, we have
%Then, the relation $I(W_k;\cH_s')_{\rho_{s|k}'}\le \beta$ holds for any $k \in \{1,\ldots, \vF\}$. 
\end{lemm}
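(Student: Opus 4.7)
\textit{Proof plan.} The plan is to combine the user secrecy and the server secrecy via an auxiliary choice $K=k'$ with $k' \neq k$ (which exists since $\vF \geq 2$). The key observation is that the marginal state on $\cH_t'$ is prepared without any knowledge of $K$ or $Q_t$, so by user secrecy its statistical behavior as a function of the files $W$ is identical under $K=k$ and under $K=k'$; then server secrecy at $K=k'$ rules out dependence on $W_k$.

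First, I would note that since $\Lambda_t$ is trace preserving, the partial trace on $\cH_t$ commutes with $\Lambda_t$, yielding
\[
\Tr_{\cH_t}\rho^{W,Q} \;=\; \biggl(\bigotimes_{s\neq t} \Lambda_s\biggr)\!\biggl(\Tr_{\cH_t^0}\rho_{\mathrm{prev}}\biggr),
\]
so the state on $\cH_t'$ depends only on $W$ and $Q_t'$; denote it $\sigma_t(W, Q_t')$. The user secrecy $\gamma(\qprot{\vL})=0$ gives $I(K;Q_t')=0$, hence $Q_t'$ is independent of $K$; since $Q_t'$ is a function of $K$ and the user's private randomness, it is also independent of $W$. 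Thus $Q_t'$ is independent of $(K,W)$, and for every $k$, $k'$, and every file value $w_k$,
\[
\rho_{t|k,w_k}' \;=\; \mathbb{E}_{W_{-k}}\mathbb{E}_{Q_t'}\, \sigma_t((w_k, W_{-k}), Q_t') \;=\; \rho_{t|k',w_k}'.
\]

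Next, I would choose any $k' \neq k$ and invoke the server secrecy: $\beta(\qprot{\vL})=0$ gives $I(W_k;\cH|K=k')=0$, and since $\Tr_{\cH_t}$ is a quantum channel on $\cH$, the data-processing inequality for conditional quantum mutual information yields $I(W_k;\cH_t'|K=k') \le I(W_k;\cH|K=k') = 0$. Hence $\rho_{t|k',w_k}'$ does not depend on $w_k$, and combining with the identity $\rho_{t|k,w_k}' = \rho_{t|k',w_k}'$ from the previous step, $\rho_{t|k,w_k}'$ is independent of $w_k$, which is precisely $I(W_k;\cH_t')_{\rho_{t|k}'}=0$. The main subtlety, and the only nontrivial step, is that server secrecy alone only restricts information about $W_i$ for $i \neq K$; the detour through $K=k'$ with $k' \neq k$ is what makes server secrecy applicable to $W_k$, and it is justified by the user-secrecy argument in the previous paragraph.
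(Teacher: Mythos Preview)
Your proposal is correct and follows essentially the same approach as the paper: use user secrecy to argue that the (cq-)state on $\cH_t'$ under $K=k$ coincides with that under any $K=k'\neq k$, and then apply server secrecy at $K=k'$ to conclude that this state carries no information about $W_k$. Your write-up is in fact more careful than the paper's, as you make explicit the data-processing step $I(W_k;\cH_t'|K=k')\le I(W_k;\cH|K=k')$ and the joint independence of $Q_t'$ from $(K,W)$, both of which the paper leaves implicit.
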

\begin{proof}
%Let $\check{\mathsf{Enc}}_{\mathrm{serv}_s}$ be the collection $(\mathsf{Enc}_{\mathrm{serv}_{j}})_{j\neq s}$ of encoders except for $\mathsf{Enc}_{\mathrm{serv}_{s}}$.
Due to the condition \eqref{eq:user_sec},
the uploaded information $Q_t'$ is independent of $K$.
Since the $\rho_{t|k}'$ is determined by $Q_t'$, 
we have $\rho_{t|k}'=\rho_{t|i}'$, which implies that 
$I(W_k;\cH_t')_{\rho_{t|k}'}=I(W_k;\cH_t')_{\rho_{t|i}'}$ for $i\neq k \in\{1,\ldots, \vF\}$.
Since server secrecy (Eq.\eqref{eq:serv_sec}) implies
\begin{align}
I(W_k;\cH_t')_{\rho_{t|i}'} =0 \quad \forall i\neq k \in\{1,\ldots, \vF\}  \label{eq:serv_sec2},
\end{align}
we have $I(W_k;\cH_t')_{\rho_{t|k}'}=0 $ for any $k \in \{1,\ldots, \vF\}$. 
\end{proof}

We also prepare the following propositions.
\begin{prop} 
	[{\cite[(4.66)]{Hay17}}]
	\label{prop:samm}
Consider $x \in \{0,\ldots, m-1\}$ is encoded to a quantum state $\rho_x$ on $\cH$
	and decoded by a POVM
%$\{\rho_x \mid x \in \{0,\ldots, m-1\} \}$ be a set of states on $\cH$ 
	$\mathsf{Y} = \{ \mathsf{Y}_x \mid x \in \{0,\ldots, m\}\} $ on $\cH$, where the measurement outcome $m$ denotes decoding failure.
%	$\mathsf{Y} = \{ \mathsf{Y}_x \mid x \in \{0,\ldots, m-1\}\} $ be a POVM on $\cH$.
Define the error probability by $\varepsilon \coloneqq 1 - (1/m) \sum_{x=0}^{m-1} \Tr \rho_x \mathsf{Y}_x$.
Then, for any $\sigma$ on $\cH$ and any $r\in(0,1)$, we have 
\begin{align}
(1-\varepsilon)^{1+r} m^r \le \frac{1}{m}   \sum_{x=0}^{m-1} \Tr \rho_x^{1+r} \sigma^{-r}.
\end{align}
\end{prop}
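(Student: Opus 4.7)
The plan is to combine a per-index noncommutative Hölder-type bound with a classical Hölder inequality over $x$, and then exploit the POVM normalization $\sum_{x=0}^{m-1} \mathsf{Y}_x \le \mathsf{I}$ together with $\Tr\sigma = 1$.

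First, for each fixed $x$ I would establish the per-index inequality
\begin{align*}
(\Tr \rho_x \mathsf{Y}_x)^{1+r} \le \Tr[\rho_x^{1+r}\sigma^{-r}] \cdot (\Tr \sigma \mathsf{Y}_x)^{r}.
\end{align*}
This follows from the data-processing inequality for the Petz quantum Rényi divergence of order $1+r$ (valid for $r\in(0,1]$) applied to the two-outcome coarse-graining channel $\mathcal{E}(\tau) = \Tr(\tau\mathsf{Y}_x)|0\rangle\langle 0| + \Tr(\tau(\mathsf{I}-\mathsf{Y}_x))|1\rangle\langle 1|$: monotonicity of $\Tr \rho^{1+r}\sigma^{-r}$ under $\mathcal{E}$ gives $\Tr \rho_x^{1+r}\sigma^{-r} \ge (\Tr \rho_x\mathsf{Y}_x)^{1+r}(\Tr \sigma\mathsf{Y}_x)^{-r} + (\Tr \rho_x(\mathsf{I}-\mathsf{Y}_x))^{1+r}(\Tr \sigma(\mathsf{I}-\mathsf{Y}_x))^{-r}$, and dropping the non-negative second term yields the claim. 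Equivalently, it can be obtained from a noncommutative Hölder inequality with conjugate exponents $1+r$ and $(1+r)/r$ applied to the sandwich forms $\sigma^{-r/2(1+r)}\rho_x\sigma^{-r/2(1+r)}$ and $\sigma^{r/2(1+r)}\mathsf{Y}_x\sigma^{r/2(1+r)}$.

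Next, take $(1+r)$-th roots of the per-index bound and sum over $x$; classical Hölder with exponents $1+r$ and $(1+r)/r$ produces
\begin{align*}
\sum_x \Tr \rho_x\mathsf{Y}_x \le \left(\sum_x \Tr\rho_x^{1+r}\sigma^{-r}\right)^{\!1/(1+r)} \left(\sum_x \Tr\sigma\mathsf{Y}_x\right)^{\!r/(1+r)}.
\end{align*}
Since $\sum_{x=0}^{m-1}\mathsf{Y}_x \le \mathsf{I}$, the second factor is at most $(\Tr\sigma)^{r/(1+r)}=1$. Raising both sides to the $(1+r)$-th power and substituting $\sum_x \Tr \rho_x\mathsf{Y}_x = m(1-\varepsilon)$ gives $m^{1+r}(1-\varepsilon)^{1+r} \le \sum_x \Tr \rho_x^{1+r}\sigma^{-r}$; dividing by $m$ produces the asserted bound $(1-\varepsilon)^{1+r} m^{r} \le (1/m)\sum_x \Tr \rho_x^{1+r}\sigma^{-r}$.

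The main obstacle is the per-index step, where noncommutativity of $\rho_x$, $\sigma$, and $\mathsf{Y}_x$ prevents any elementary factorization of the trace and forces the use of a genuinely operator-theoretic ingredient (data processing of the Petz Rényi divergence, or equivalently Lieb's concavity / an operator Hölder inequality). Once this step is in hand, the remainder of the argument is routine bookkeeping: a single application of classical Hölder, the POVM normalization, and the definition of $\varepsilon$.
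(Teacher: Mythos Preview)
Your proof is correct. The per-index bound via data processing of the Petz R\'enyi divergence of order $1+r$ under the two-outcome coarse-graining is valid for $r\in(0,1]$, and the subsequent classical H\"older step together with $\sum_{x=0}^{m-1}\mathsf{Y}_x\le\mathsf{I}$ and $\Tr\sigma=1$ gives exactly the desired inequality.

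Note, however, that the paper does not supply its own proof of this proposition: it is quoted directly from Hayashi's textbook as \cite[(4.66)]{Hay17} and used as a black box in the converse argument. Your derivation is the standard one underlying that textbook result, so there is no meaningful methodological divergence to compare.
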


%We also prepare the following data processing inequality \cite[(5.53)]{Hay17}.
\begin{prop}[{\cite[(5.53)]{Hay17}}] \label{ineq:infoprr}
%Let $\phi(-r|\rho\|\sigma) \coloneqq \log \Tr \rho^{1+r}\sigma^{-r}$.
For any states $\rho$ and $\sigma$, TP-CP map $\kappa$ and $r\in(0,1)$, we have 
\begin{align*}
\phi(-r|\rho\|\sigma) \coloneqq \log \Tr \rho^{1+r}\sigma^{-r} \geq \phi(-r|\kappa(\rho)\|\kappa(\sigma)).
\end{align*}
\end{prop}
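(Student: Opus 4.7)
The plan is to prove this data-processing inequality via Stinespring dilation, reducing the general TP-CP map case to two simpler building blocks: conjugation by an isometry and partial trace. By the Stinespring theorem, any TP-CP map $\kappa$ admits a factorization $\kappa(\cdot)=\Tr_E(V\cdot V^{\dagger})$ for some isometry $V$ into an enlarged space $\mathcal{H}'\otimes \mathcal{H}_E$. It therefore suffices to verify the inequality separately for isometric embeddings and for the partial trace over the environment $\mathcal{H}_E$, since the full statement then follows by composition.

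The isometric step is essentially trivial. For an isometry $V$ with $V^{\dagger}V=\mathsf{I}$, the functional calculus gives $(V\rho V^{\dagger})^{1+r}=V\rho^{1+r}V^{\dagger}$ and $(V\sigma V^{\dagger})^{-r}=V\sigma^{-r}V^{\dagger}$ on the range of $V$, using the standard generalized-inverse convention on $\mathrm{supp}(\sigma)$ for the negative power. Taking the trace and cycling gives the equality $\phi(-r|V\rho V^{\dagger}\|V\sigma V^{\dagger})=\phi(-r|\rho\|\sigma)$, so no loss occurs under isometric embedding.

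The substantive step is monotonicity under partial trace: for bipartite states $\rho_{AE}$ and $\sigma_{AE}$, one must show
\begin{align*}
\Tr \rho_{AE}^{1+r}\sigma_{AE}^{-r}\ge \Tr \rho_A^{1+r}\sigma_A^{-r}.
\end{align*}
I would derive this from the joint convexity of the functional $(X,Y)\mapsto \Tr X^{1+r}Y^{-r}$ on positive operators for $r\in(0,1)$, combined with a standard unitary-twirling argument: using $\rho_A\otimes(\mathsf{I}_E/d_E)=\int (\mathsf{I}_A\otimes U_E)\,\rho_{AE}\,(\mathsf{I}_A\otimes U_E)^{\dagger}\,dU$ and similarly for $\sigma$, joint convexity together with unitary invariance of the trace (and the fact that scaling by $\mathsf{I}_E/d_E$ on both arguments cancels in the normalized comparison) yields the stated partial-trace monotonicity.

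The main obstacle is the joint convexity itself, with exponents $(1+r,-r)$ satisfying $1+r\in(1,2)$, $-r\in(-1,0)$, and $(1+r)+(-r)=1$. This is a nontrivial operator-theoretic fact lying within the scope of the Lieb--Ando concavity/convexity theorems, and a self-contained proof is technical enough that it is simply cited from \cite{Hay17}. One natural route to establish it is to use the integral representation $t^{-r}=(\sin\pi r/\pi)\int_0^{\infty}s^{-r}(s+t)^{-1}\,ds$ valid for $r\in(0,1)$, which reduces joint convexity of $(X,Y)\mapsto \Tr X^{1+r}Y^{-r}$ to operator convexity of simpler resolvent-type functionals that can be checked directly from operator-monotone calculus.
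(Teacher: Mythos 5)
Your proposal is mathematically sound, but note that the paper itself offers no proof of this proposition at all: it is stated with a citation to \cite[(5.53)]{Hay17} and used as a black box, so there is no in-paper argument to compare against. Your sketch is the standard derivation of this data-processing inequality and each step checks out: the Stinespring reduction to an isometry plus a partial trace is valid; the isometry step is indeed an equality, since $(V\rho V^{\dagger})^{1+r}=V\rho^{1+r}V^{\dagger}$ and likewise for the generalized-inverse power of $\sigma$, so cyclicity and $V^{\dagger}V=\mathsf{I}$ give $\phi(-r|V\rho V^{\dagger}\|V\sigma V^{\dagger})=\phi(-r|\rho\|\sigma)$; and the twirling argument for the partial trace works precisely because the exponents sum to one, so the factors $d_E^{-(1+r)}$ and $d_E^{r}$ from $(\mathsf{I}_E/d_E)^{1+r}$ and $(\mathsf{I}_E/d_E)^{-r}$ combine with $\Tr\mathsf{I}_E=d_E$ to cancel exactly, yielding $\Tr(\rho_A\otimes\mathsf{I}_E/d_E)^{1+r}(\sigma_A\otimes\mathsf{I}_E/d_E)^{-r}=\Tr\rho_A^{1+r}\sigma_A^{-r}$. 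The joint convexity of $(X,Y)\mapsto\Tr X^{1+r}Y^{-r}$ for exponents $(1+r,-r)$ with $1+r\in(1,2)$, $-r\in(-1,0)$ summing to $1$ is exactly the case covered by Ando's convexity theorem, so deferring it to a citation is legitimate — and in fact leaves you citing a strictly smaller ingredient than the paper does, since the paper cites the entire proposition. The only loose end worth a sentence in a fully rigorous write-up is the support convention when $\mathrm{supp}(\rho)\not\subseteq\mathrm{supp}(\sigma)$, where the generalized inverse must be handled consistently on both sides of the inequality; this is a standard technicality and does not affect the application in Section V.
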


\begin{prop} \label{prop:dxpuresmax}
For a pure state $|\psi\rangle$ in a system $\cH_1\otimes \cH_2$,
	let $\rho_1$ be the reduced state of $|\psi\rangle$ on $\cH_1$
	and $d_{\mathrm{min}} = \min \{ \dim\cH_1, \dim\cH_2\}$.
For any $s\in(0,1)$, we have 
	\begin{align}
	\Tr \rho_1^{s} \le d_{\mathrm{min}}^{1-s}.
	\label{eq:rho1s}
	\end{align}
\end{prop}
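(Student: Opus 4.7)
The plan is to reduce the statement to an elementary inequality about a probability distribution on at most $d_{\mathrm{min}}$ atoms, using the Schmidt decomposition and concavity of $x\mapsto x^s$.

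First, I would invoke the Schmidt decomposition to write
\begin{align*}
|\psi\rangle = \sum_{i=1}^{d_{\mathrm{min}}} \sqrt{\lambda_i}\, |e_i\rangle \otimes |f_i\rangle,
\end{align*}
where $\{|e_i\rangle\}$ and $\{|f_i\rangle\}$ are orthonormal sets in $\cH_1$ and $\cH_2$ respectively, and $\lambda_i \ge 0$ with $\sum_{i=1}^{d_{\mathrm{min}}} \lambda_i = 1$. The crucial observation is that the Schmidt rank cannot exceed $d_{\mathrm{min}} = \min\{\dim\cH_1,\dim\cH_2\}$, so $\rho_1 = \sum_i \lambda_i |e_i\rangle\langle e_i|$ has at most $d_{\mathrm{min}}$ nonzero eigenvalues.

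Second, I would expand $\Tr\rho_1^s = \sum_{i=1}^{d_{\mathrm{min}}} \lambda_i^s$ (with the convention $0^s=0$) and apply Jensen's inequality to the strictly concave function $f(x)=x^s$ on $[0,\infty)$ for $s\in(0,1)$. Averaging uniformly over the $d_{\mathrm{min}}$ indices gives
\begin{align*}
\frac{1}{d_{\mathrm{min}}} \sum_{i=1}^{d_{\mathrm{min}}} \lambda_i^s
\le \left(\frac{1}{d_{\mathrm{min}}} \sum_{i=1}^{d_{\mathrm{min}}} \lambda_i \right)^{\!s}
= d_{\mathrm{min}}^{-s},
\end{align*}
and multiplying through by $d_{\mathrm{min}}$ yields the desired bound $\Tr\rho_1^s \le d_{\mathrm{min}}^{1-s}$.

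There is no real obstacle: once the Schmidt rank restriction is used, the inequality is a one-line consequence of Jensen. The only minor bookkeeping is ensuring that possible zero Schmidt coefficients do not cause issues, which is immediate from $0^s=0$ for $s\in(0,1)$. Equality is attained exactly when $|\psi\rangle$ is maximally entangled on a $d_{\mathrm{min}}$-dimensional subspace, consistent with the expected role of $d_{\mathrm{min}}^{1-s}$ as the extremal value of the Rényi purity of a reduced state.
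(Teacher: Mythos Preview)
Your proof is correct and follows essentially the same approach as the paper: Schmidt decomposition bounds the number of nonzero eigenvalues by $d_{\mathrm{min}}$, and then concavity of $x\mapsto x^s$ shows the sum $\sum_i p_i^s$ is maximized at the uniform distribution. The only cosmetic difference is that the paper invokes Karamata's inequality for the concavity step, whereas you use Jensen's inequality; your route is slightly more direct for this particular bound.
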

The proof of Proposition~\ref{prop:dxpuresmax} is given in Appendix~B.

%The proofs of Propositions~\ref{prop:samm} and \ref{prop:dxpuresmax} are given in the supplemental material.

Now, we prove the converse bound by three steps with the assumption that $\beta(\qprot{\vL})=0$ and $\gamma(\qprot{\vL})=0$.

\noindent{\textbf{Step 1}:}\quad
In this step, we introduce several notations for simplicity.
We denote 
	the queried file,
	the collection of non-queried $\vF-1$ files,
	the collection of queries
	by $w$, $w^c$, $q$, respectively,
	and we also denote $z \coloneqq (w^c,q)$.
Then, $\rho_{w,z}$ denotes the quantum state on $\cH = \bigotimes_{i=1}^{\vN} \cH_i$.
We consider the entire quantum systems $\cH$ as a bipartite system $\cH = \cH_t\otimes \cH_t'$.
We also denote by $\rho_{z}'$ the reduced state of $\rho_{w,z}$ on $\cH_t'$
	since the reduced state does not depend on $w$ from Lemma~\ref{L1}.

\noindent{\textbf{Step 2}:}\quad
In this step, we prove the inequality
\begin{align}
(1-P_{\mathrm{err}}(\Psi_{\mathrm{QPIR}}^{(\vL)}) )^{1+r} \vL^{r}
 \le \min_t d_t^{2r}.
 \label{asdf1}
\end{align}

Applying Proposition~\ref{prop:samm} to the choice 
	\begin{align}
	(x,\rho_x, \sigma) &\coloneqq \paren*{w , \rho_{w,z}, \rho_{z}' \otimes \frac{\mathsf{I}_{\cH_t}}{d_t}}
	\end{align}
	for any $r \in (0,1)$, we obtain
	\begin{align}
	(1-P_{\mathrm{err},z}(\Psi_{\mathrm{QPIR}}^{(\vL)}))^{1+r} \vM^r \le \frac{1}{\vM}   \sum_{w=0}^{\vM-1} \Tr \rho_{w,z}^{1+r} \paren*{\rho_{z}' \otimes \frac{\mathsf{I}_{\cH_t}}{d_t}}^{-r}
	\end{align}
where $P_{\mathrm{err},z}(\Psi_{\mathrm{QPIR}}^{(\vL)})$ is the error probability when $z$ is fixed.
By averaging with respect to $z$ and from the convexity of $x^{1+r}$,	
	we have
\begin{align}
&(1-P_{\mathrm{err}}(\Psi_{\mathrm{QPIR}}^{(\vL)}) )^{1+r} \vL^{r}
%\le \mathbb{E}_{Z} \frac{1}{\vL} \sum_{w=0}^{\vL-1} \Tr \rho_{w,Z}^{1+r} {\sigma}_Z^{-r} \nonumber \\
 \le \mathbb{E}_{Z} \frac{1}{\vL} \sum_{w=0}^{\vL-1} \Tr \rho_{w,Z}^{1+r} \paren*{\rho_{Z}' \otimes \frac{\mathsf{I}_{\cH_t}}{d_t}}^{-r}
\label{H4}
\end{align}
for any $t\in\{1,\ldots,\vN\}$.

Let $t^* = \argmin_t d_t$.
We choose orthonormal vectors $\{|\psi_{w,z,x}\rangle \}_x$ such that $\rho_{w,z}=\sum_x p_{w,z,x} |\psi_{w,z,x}\rangle \langle\psi_{w,z,x}|$ with $\sum_x p_{w,z,x} = 1$.
%For diagonalization $\rho_{w,z}=\sum_x p_{w,z,x} |\psi_{w,z,x}\rangle \langle\psi_{w,z,x}|$ by orthonormal vectors $|\psi_{w,z,x}\rangle$,
We denote by $\sigma_{z,x}'$ the reduced state of $|\psi_{w,z,x}\rangle \langle\psi_{w,z,x}|$ on $ \cH_{t^*}'$, {i.e., $\rho_z' = \sum_x p_{w,z,x} \sigma_{z,x}'$.}
With this decomposition, we can upper bound the RHS of \eqref{H4} as 
\begin{align}
& \Tr \rho_{w,z}^{1+r} \paren*{\rho_{z}' \otimes \frac{\mathsf{I}_{\cH_{t^*}}}{d_{t^*}}}^{-r} \nonumber \\
&\le  \sum_x p_{w,z,x}
\Tr (|\psi_{w,z,x}\rangle \langle\psi_{w,z,x}|)^{1+r} \paren*{\sigma_{z,x}' \otimes \frac{\mathsf{I}_{\cH_{t^*}}}{d_{t^*}}}^{-r} \!\!\! \label{ineq:rev} \\ 
&= \sum_x p_{w,z,x} 
\Tr |\psi_{w,z,x}\rangle \langle\psi_{w,z,x}| \paren*{\sigma_{z,x}' \otimes \frac{\mathsf{I}_{\cH_{t^*}}}{d_{t^*}}}^{-r} \nonumber \\
&= d_{t^*}^r 
	\sum_x 
p_{w,z,x} 
\Tr |\psi_{w,z,x}\rangle \langle\psi_{w,z,x}| ((\sigma_{z,x}')^{-r} \otimes \mathsf{I}_{\cH_{t^*}})
\nonumber \\
&= d_{t^*}^r 
	\sum_x 
p_{w,z,x} 
	(\Tr_{\cH_{t^*}} |\psi_{w,z,x}\rangle \langle\psi_{w,z,x}|) (\sigma_{z,x}')^{-r}
	\nonumber \\
&= d_{t^*}^r \sum_x p_{w,z,x} 
\Tr (\sigma_{z,x}')^{1-r} 
\label{fromComment} \\
&\le d_{t^*}^{2r}. \label{H5}
\end{align}
Here, 
{Eq.~\eqref{ineq:rev} is obtained by applying Proposition~\ref{ineq:infoprr} for the choice 
	\begin{align*}
	\rho &\coloneqq \sum_x p_{w,z,x} |x\rangle\langle x|\otimes |\psi_{w,z,x}\rangle\langle \psi_{w,z,x}|,	\\
	\sigma &\coloneqq \sum_x p_{w,z,x} |x\rangle\langle x|\otimes \paren*{\sigma_{z,x}\otimes \frac{\mathsf{I}_{\cH_{t^*}}}{d_{t^*}}}, \\
		%on $\mathcal{X}\otimes\cH$,
	\kappa &\coloneqq\Tr_{\mathcal{X}},
	\end{align*}
	since 
	$\phi(-r|\rho\|\sigma)$ is the RHS of \eqref{ineq:rev} and 
	$\phi(-r|\kappa(\rho)\|\kappa(\sigma))$ is the LHS of \eqref{ineq:rev} from
	\begin{align*}
		\kappa(\rho) &= \Tr_{\mathcal{X}} \sum_x p_{w,z,x} |x\rangle\langle x|\otimes |\psi_{w,z,x}\rangle\langle \psi_{w,z,x}| \\
		&= 
		\sum_x p_{w,z,x} |\psi_{w,z,x}\rangle\langle \psi_{w,z,x}| = \rho_{w,z},\\ 
		\kappa(\sigma) &= \Tr_{\mathcal{X}} \sum_x p_{w,z,x} |x\rangle\langle x|\otimes \paren*{\sigma_{z,x}\otimes \frac{\mathsf{I}_{\cH_{t^*}}}{d_{t^*}} } \\
		&=
		\sum_x p_{w,z,x} \sigma_{z,x}\otimes \frac{\mathsf{I}_{\cH_{t^*}}}{d_{t^*}} = \rho_z' \otimes \frac{\mathsf{I}_{\cH_{t^*}}}{d_{t^*}}.
	\end{align*}
	%and
	%\begin{align}
	%\phi(-r|\rho\|\sigma) 
	%		=	\Tr \rho^{1+r}\sigma^{-r}  
	%		=	\Tr
	%\end{align}
Eq.~\eqref{fromComment} follows from $\Tr |\psi_{w,z,x}\rangle \langle\psi_{w,z,x}| = \sigma_{z,x}'$.
Eq.~\eqref{H5} is obtained 
	by applying Proposition~\ref{prop:dxpuresmax}
		for $(|\psi\rangle, \rho_1, s, d_{\mathrm{min}}) \coloneqq  (|\psi_{w,z,x}\rangle, \sigma_{z,x}', 1-r, \dim \cH_{t^*})$.
%	because 
%% 	a pure state on a bipartite system 
%	the state $\sigma_{z,x}'$ is the reduced state of a pure state $|\psi_{w,z,x}\rangle$ in $\cH_t'\otimes \cH_t$,
%	which implies $\rank \sigma_{z,x}' \le \dim \cH_t = d_t$ and therefore 
%% Since the rank of $\sigma_{z,x}$ is not greater than $d_t$, we have
Combining \eqref{H4} for $t \coloneqq t^*$ and \eqref{H5}, we have the desired inequality \eqref{asdf1}.

\noindent{\textbf{Step 3}:}\quad
In this step, we prove the converse bound by contradiction.
Suppose that there exists a sequence of QPIR protocols $\{\Psi_{\mathrm{QPIR}}^{(\vL_\ell)}\}_{\ell=1}^{\infty}$ 
% \substack{\{\vL_\ell\}_{\ell=1}^{\infty},\\ 
	such that 
		%the error and  
		%the QPIR rate satisfy 
		\begin{align}
		\limsup_{\ell\to\infty} P_{\mathrm{err}}(\Psi_{\mathrm{QPIR}}^{(\vL_\ell)}) &< 1,  \label{eq:sdperr}\\
		\liminf_{\ell\to\infty} R(\Psi_{\mathrm{QPIR}}^{(\vL_\ell)}) &> \frac{2}{\vN}.
		\end{align}
This inequality is also written as 
\begin{align*}
	&1 > \limsup_{\ell\to\infty} \frac{2}{\vN R(\Psi_{\mathrm{QPIR}}^{(\vL_\ell)})}
		=
	  \limsup_{\ell\to\infty} \frac{2}{\vN}\cdot \frac{\log D(\Psi_{\mathrm{QPIR}}^{(\vL_\ell)})}{\log \vL_\ell} \\
	  &= 
	  \limsup_{\ell\to\infty} \frac{\log (D(\Psi_{\mathrm{QPIR}}^{(\vL_\ell)}))^{2/\vN}}{\log \vL_\ell},
\end{align*}
which implies 
$$\limsup_{\ell\to\infty} \frac{(D(\Psi_{\mathrm{QPIR}}^{(\vL_\ell)}))^{2/\vN}}{\vL_\ell} = 0.$$
Furthermore, 
	since 
	$$
	0
	\leq 
	% ((\min_t \dim \cH_t)^{\vN})^{2/\vN}/\vL_\ell= 
	\frac{\min_t d_t^2}{\vL_\ell}
	=
	\frac{(\min_t d_t^\vN)^{2/\vN}}{\vL_\ell}
	\le
	\frac{(D(\Psi_{\mathrm{QPIR}}^{(\vL_\ell)}))^{2/\vN}}{\vL_\ell}
	,$$
% (D(\Psi_{\mathrm{QPIR}}^{(\vL_\ell)}))^{2/\vN}/\vL_\ell 
% \ge (1-P_{\mathrm{err}}(\Psi_{\mathrm{QPIR}}^{(\vL_\ell)}) )^{1+1/r} 
	we have
\begin{align}
\lim_{\ell\to\infty} \frac{\min_t d_t^2}{\vL_\ell} = 0.
	\label{asdf2}
\end{align}
Combining \eqref{asdf1} and \eqref{asdf2}, the probability of correctness $1-P_{\mathrm{err}}(\Psi_{\mathrm{QPIR}}^{(\vL_\ell)})$ approaches $0$, which contradicts with \eqref{eq:sdperr}.
Thus, any QPIR protocol with asymptotic error probability less than $1$ has QPIR rate at most $2/\vN$,
	which implies \eqref{H2}.

% By using the independence of $\rho'$ from $W_K$, the converse is proved as follows.
% After $\cH'$ is transmitted to the user,
% we can consider the downloading step of our protocol 
% as an instance of entanglement-assisted classical communication of $W_K$ from $\mathtt{serv}_{s}$ to the user in the case that all servers except for $\mathtt{serv}_s$ for any $s \in \{1,\ldots, \vL\}$ collude.
% Therefore, 
% by the entanglement-assisted classical capacity \cite{BSST99} for the identity map $\mathsf{I}_{\cH_{s}}$, we have
% \begin{align}
% \lefteqn{\log \vL \leq C_{\mathrm{EA}}(\mathsf{I}_{\cH_{s}}) = \max_{|x_{\cH_{s}\!\mathcal{R}}\rangle} I_{|x_{\cH_{s}\!\mathcal{R}}\rangle} (\cH_{s};\mathcal{R})}\\
% =& \max_{|x_{\cH_{s}\!\mathcal{R}}\rangle} H_{|x_{\cH_{s}\!\mathcal{R}}\rangle}(\cH_s)+H_{|x_{\cH_{s}\!\mathcal{R}}\rangle}(\mathcal{R}) \leq 2\log \dim \cH_{s},
% \end{align}
% where $\mathcal{R}$ is a reference system and the maximization is with respect to all bipartite pure states $|x_{\cH_{s}\!\mathcal{R}}\rangle$ on $\cH_s\otimes \mathcal{R}$.
% Summing this inequality for any $s\in\{1,\ldots,\vN\}$, we have
% \begin{align}
% \vN \log \vL \leq 2 \log \dim \bigotimes_{t=1}^{\vN} \cH_i,
% \end{align}
% which proves the converse part of Theorem \ref{theo:main}.

\section{Conclusion}

We have presented the $(\vN-1)$-private QSPIR capacity for even number $\vN$ of servers when any $\vN-1$ servers collude.
We constructed a $(\vN-1)$-private QSPIR protocol of rate $\ceil{\vN/2}^{-1}$, and proved that $\vN/2$ is the strong converse bound with the perfect server and user secrecy.
Our protocol is constructed by using the quantum teleportation and the two-sum transmission protocol repetitively.
% The converse bound used the fact that the state of any $\vN-1$ servers is independent of the retrieved file from the perfect secrecy of the server and user.
The converse bound used the fact that the state of any $\vN-1$ servers is independent of the retrieved file, which follows from the perfect secrecy of the server and the user.
% and therefore, the state of any $\vN-1$ servers can be considered as a 

	Following the conference version of this paper, the paper \cite{SH20} proved that both symmetric and non-symmetric $\vT$-private QPIR capacity $1$ for $\vT\leq \vN/2$.
		%, the non-symmetric $\vT$-private QPIR capacity equals to .
	However, for $\vT > \vN/2$, the $\vT$-private QPIR capacity is derived only for the case where the server secrecy, i.e., leakage to the user, is negligible with respect to the size of files.
	Thus, it is an open problem to derive the (non-symmetric) $\vT$-private QPIR capacity for $\vT > \vN/2$.
	For this problem, the converse proof of classical $\vT$-private PIR \cite{SJ18} cannot be directly applied 
		%to the quantum case 
		because the $\vT$-private QPIR capacity for $\vT\leq \vN/2$ is $1$ and it is already greater than the classical capacity $\paren*{1-\vT/\vN}/\big(1-\paren*{\vT/\vN}^{\vF}\big)$ of \cite{SJ18}.
	Thus, we expect that the $\vT$-private QPIR capacity for $\vT > \vN/2$ is also greater than its classical capacity of \cite{SJ18}.
	We leave this problem as an open problem.

\section*{Acknowledgement}
The authors are grateful to Dr. Hsuan-Yin Lin and Dr. Eirik Rosnes for helpful discussions and comments for the comparison of classical PIRs to quantum PIRs \cite{HE19}.

\appendices

\section{Fundamentals of Quantum Information Theory} \label{append:basic}

In this appendix, we briefly introduce basic concepts of the quantum information theory.
In the following, we give the mathematical definitions of quantum system, quantum state, quantum measurement, and quantum operation. 
For the physical motivations of these definitions and more detailed introduction of these definitions, we refer to \cite{NC00,Hay17}.

A quantum system, or simply {\em system}, is an object considered in quantum information theory
	and it is mathematically defined by a complex Hilbert space $\cH$. 
In this paper, we only consider finite-dimensional Hilbert spaces.
A two-dimensional Hilbert space is called a qubit.

The state of a quantum system, also called quantum state, represents the the information in a quantum system.
A quantum state of $\cH$ is mathematically described by 
	a density matrix $\rho$ defined by a positive-semidefinite matrix on $\cH$ such that $\Tr\rho = 1$.
As a special case, the state $\rho = (1/\dim \cH)\cdot  \mathsf{I}_{\cH}$ is called the {\em completely mixed state}.
When a density matrix $\rho$ is a rank-one matrix, we can denote $\rho = |\psi\rangle\langle\psi|$ with a unit vector $|\psi\rangle$ and thus, we sometimes represent the state by the unit vector $|\psi\rangle$, called a pure state.
A quantum state is mapped to another state by quantum operations.

For two quantum systems $\cH_1$ and $\cH_2$, the composite quantum system is described by $\cH_1\otimes \cH_2$.
When the state of $\cH_1\otimes \cH_2$ is $\rho$, the reduced state on $\cH_1$ is represented by $\rho_1\coloneqq \Tr_{\cH_1} \rho$, where $\Tr_{\cH_1}$ is the partial trace on $\cH_1$ defined by the unique linear map such that $\Tr_{\cH_1} X\otimes Y = (\Tr X) Y$ for any matrices $X$ on $\cH_1$ and $Y$ on $\cH_2$.
%Multiple quantum systems are described by the tensor product of each Hilbert space.
%On two qubits $\cH_1$ and $\cH_2$, 
The Schmidt decomposition is the decomposition theorem for any pure state $|\psi\rangle$ on a composite system $\cH_1\otimes \cH_2$ as follows.
	\begin{prop}[Schmidt decomposition] \label{prop:schmidt}
	For any pure state $|\psi\rangle$ on a composite system $\cH_1\otimes \cH_2$,
	there exist orthonormal vectors $\{|x_i\rangle \}_i$ of $\cH_1$ and $\{|y_i\rangle \}_i$ of $\cH_2$ 
		and a probability distribution $\{p_i\}_i$
		such that 
	$|\psi\rangle = \sum_i \sqrt{p_{i}} |x_i\rangle \otimes |y_i\rangle \in \cH_1 \otimes \cH_2$.
	\end{prop}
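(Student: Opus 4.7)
The plan is to reduce the statement to the singular value decomposition (SVD) of a complex matrix. First I would fix an arbitrary orthonormal basis $\{|e_k\rangle\}_{k=1}^{d_1}$ of $\cH_1$ and $\{|f_\ell\rangle\}_{\ell=1}^{d_2}$ of $\cH_2$, and expand
\begin{align*}
|\psi\rangle = \sum_{k,\ell} c_{k\ell}\, |e_k\rangle \otimes |f_\ell\rangle,
\end{align*}
so that all the information in $|\psi\rangle$ is encoded in the $d_1 \times d_2$ matrix $C = (c_{k\ell})$. The normalization $\langle \psi | \psi \rangle = 1$ translates to $\sum_{k,\ell} |c_{k\ell}|^2 = 1$, i.e.\ $\Tr(C C^\dagger) = 1$.

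Next, I would apply the SVD to write $C = U D V^\top$, where $U$ is a $d_1 \times d_1$ unitary, $V$ is a $d_2 \times d_2$ unitary, and $D$ is a rectangular diagonal matrix with nonnegative real entries $\sqrt{p_i}$ on its diagonal. Then I would define the candidate orthonormal vectors
\begin{align*}
|x_i\rangle \coloneqq \sum_k U_{ki}\, |e_k\rangle, \qquad |y_i\rangle \coloneqq \sum_\ell V_{\ell i}\, |f_\ell\rangle,
\end{align*}
and verify by direct substitution that $|\psi\rangle = \sum_i \sqrt{p_i}\, |x_i\rangle \otimes |y_i\rangle$; the unitarity of $U$ and $V$ immediately guarantees that $\{|x_i\rangle\}$ and $\{|y_i\rangle\}$ are orthonormal sets in their respective spaces.

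Finally, to confirm $\{p_i\}$ is a probability distribution, I would use the normalization of $|\psi\rangle$: by orthonormality of the $|x_i\rangle \otimes |y_j\rangle$, the squared norm of $|\psi\rangle$ equals $\sum_i p_i$, which must equal $1$; and $p_i \geq 0$ since they are squared singular values.

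There is no real obstacle here—the proof is a straightforward translation of SVD into tensor-product language. The only minor point to be careful about is bookkeeping when $d_1 \neq d_2$, so that $D$ is rectangular and the sum over $i$ runs only up to $\min(d_1,d_2)$ (with any extra $\sqrt{p_i}$ taken to be zero, so the corresponding terms drop out of the decomposition). One could alternatively give a slicker, coordinate-free proof by invoking the spectral theorem for $\rho_1 = \Tr_{\cH_2} |\psi\rangle\langle \psi|$, letting $\{|x_i\rangle\}$ be its eigenbasis with eigenvalues $p_i$, and defining $|y_i\rangle$ by $(\langle x_i | \otimes \mathsf{I})|\psi\rangle = \sqrt{p_i}\,|y_i\rangle$; orthonormality of the nonzero $|y_i\rangle$ then follows from orthonormality of the $|x_i\rangle$ together with the definition of $\rho_1$. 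Either route yields the claimed decomposition.
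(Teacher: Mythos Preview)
Your proof is correct and is the standard SVD-based derivation of the Schmidt decomposition. Note, however, that the paper does not actually give a proof of this proposition: it is stated in Appendix~A as a well-known fact from quantum information theory (with a reference to standard textbooks such as \cite{NC00,Hay17}) and is used only as a tool in the proof of Proposition~\ref{prop:dxpuresmax}. So there is no paper proof to compare against; your argument simply supplies what the paper leaves as background.
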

%Thus, the reduced states $\rho_1 = \sum_{i} p_i |x_i\rangle\langle x_i|$ on $\cH_1$ and $\rho_2 = \sum_{i} p_i |y_i\rangle\langle y_i|$  on $\cH_2$ have the same rank
%and therefore, $\rank \rho_1 \leq  \min \{ \dim\cH_1, \dim\cH_2\} = d_{\mathrm{min}}$.

A quantum operation is described by 
	a trace-preserving and completely positive (TP-CP) map.
A map $\kappa$ from a system $\cH$ to a system $\cH'$ is called a TP-CP map if
	$\kappa$ is linear,
	$\Tr\kappa(\rho) = \rho$ for any state $\rho$ on $\cH$,
	and
	$\kappa\otimes \iota_{\mathbb{C}^{d}} (\rho)$ is a positive-semidefinite matrix for any state $\rho$ on $\cH\otimes \mathbb{C}^{d}$, 
	where $\iota_{\mathbb{C}^d}$ is the identity map on the system $\mathbb{C}^d$.
The simplest example of a TP-CP map is a unitary map $\rho\mapsto U\rho U^\dagger$ for a unitary matrix $U$.
When a unitary map of a unitary matrix $U$ is applied to the system with pure state $|\psi\rangle$, 
	the resultant state is the pure state $U|\psi\rangle$.
	
Classical information is extracted from a system $\cH$ by a quantum measurement.
A quantum measurement on a system $\cH$ is described by a positive-operator valued measure (POVM).
A POVM on $\cH$ is a set $\mathsf{Y} = \{\mathsf{Y}_i \mid i\in \mathcal{I}\}$ of positive-semidefinite matrices on $\cH$ such that $\mathsf{Y}_i = \mathsf{I}_{\cH}$.
When the POVM measurement $\mathsf{Y}$ is performed on the system $\cH$ with state $\rho$,  
	the measurement outcome $i\in\mathcal{I}$ is obtained with probability $\Tr \rho \mathsf{Y}_i \in [0,1]$.
If all elements of POVM are projections, i.e., $\mathsf{Y}_i^2 = \mathsf{Y}_i$ and $\mathsf{Y}_i^{\dagger} = \mathsf{Y}_i$,
	the POVM is also called a projection-valued measure (PVM).
Sometimes an orthonormal basis $\{ |e_i\rangle \}_i$ of $\cH$ is considered as a PVM $\{ |e_i\rangle\langle e_i| \}_i$.

\section{Proof of Proposition~\ref{prop:dxpuresmax}}
%\begin{proof}
%Let $\rho_2$ be the reduced state of $|\psi\rangle$ on $\cH_2$.
By the Schmidt decomposition (Proposition~\ref{prop:schmidt}),
	the pure state $|\psi\rangle$ on $\cH_1\otimes \cH_2$
	is decomposed as
	$|\psi\rangle = \sum_i \sqrt{p_{i}} |x_i\rangle \otimes |y_i\rangle \in \cH_1 \otimes \cH_2$
	for some orthonormal vectors $\{|x_i\rangle \}_i$ of $\cH_1$ and $\{|y_i\rangle \}_i$ of $\cH_2$ 
		and some probability distribution $\{p_i\}_i$.
Thus, the reduced states $\rho_1 = \sum_{i} p_i |x_i\rangle\langle x_i|$ on $\cH_1$ and $\rho_2 = \sum_{i} p_i |y_i\rangle\langle y_i|$  on $\cH_2$ have the same rank
and therefore, $\rank \rho_1 \leq  \min \{ \dim\cH_1, \dim\cH_2\} = d_{\mathrm{min}}$.
On the other hand, 
	since the function $x^s$ for $s\in(0,1)$ is concave,
		the LHS of \eqref{eq:rho1s} $\Tr \rho_1^{s} = \sum_i p_i^s$ maximizes if $p_i$ have the same value for all $i$, i.e., $\{p_i\}_i$ is the uniform distribution, by Karamata's inequality \cite{Karamata}. 
Thus, we have the desired inequality as 
	%\begin{equation*}
	$$\Tr \rho_1^{s} = \sum_i p_i^s  \leq \sum_{j=1}^{d_{\mathrm{min}}} d_{\mathrm{min}}^{-s} = d_{\mathrm{min}}^{1-s}. 
	$$
%	\QEDB
	%{\flushright  \QEDB}
	%\end{equation*}
%\end{proof}

\begin{IEEEbiographynophoto}{Seunghoan Song}(GS'20)
received the B.E. degree from Osaka University, Japan, in 2017 and the M.Math.\ and Ph.D.\ degrees in mathematical science from Nagoya University, Japan, in 2019 and 2020, respectively. 
He is a research fellow of the Japan Society of the Promotion of Science (JSPS) from 2020.
He is currently a JSPS postdoctoral fellow at the Graduate School of Mathematics, Nagoya University.
He awarded the School of Engineering Science Outstanding Student Award in 2017 and 
Graduate School of Mathematics Award for Outstanding Masters Thesis in 2019.
His research interests include classical and quantum information theory and its applications to secure communication protocols.
\end{IEEEbiographynophoto}

\begin{IEEEbiographynophoto}{Masahito Hayashi}(M'06--SM'13--F'17) was born in Japan in 1971.
He received the B.S.\ degree from the Faculty of Sciences in Kyoto
University, Japan, in 1994 and the M.S.\ and Ph.D.\ degrees in Mathematics from
Kyoto University, Japan, in 1996 and 1999, respectively. He worked in Kyoto University as a Research Fellow of the Japan Society of the
Promotion of Science (JSPS) from 1998 to 2000,
and worked in the Laboratory for Mathematical Neuroscience,
Brain Science Institute, RIKEN from 2000 to 2003,
and worked in ERATO Quantum Computation and Information Project,
Japan Science and Technology Agency (JST) as the Research Head from 2000 to 2006.
He also worked in the Superrobust Computation Project Information Science and Technology Strategic Core (21st Century COE by MEXT) Graduate School of Information Science and Technology, The University of Tokyo as Adjunct Associate Professor from 2004 to 2007.
He worked in the Graduate School of Information Sciences, Tohoku University as Associate Professor from 2007 to 2012.
In 2012, he joined the Graduate School of Mathematics, Nagoya University as Professor.
Also, he was appointed in Centre for Quantum Technologies, National University of Singapore as Visiting Research Associate Professor from 2009 to 2012
and as Visiting Research Professor from 2012 to now.
He worked in Center for Advanced Intelligence Project, RIKEN as
a Visiting Scientist from 2017 to 2020.
He worked in Shenzhen Institute for Quantum Science and Engineering, Southern University of Science and Technology, Shenzhen, China as a Visiting Professor from 2018 to 2020,
and
in Center for Quantum Computing, Peng Cheng Laboratory, Shenzhen, China
as a Visiting Professor from 2019 to 2020.
In 2020, he joined Shenzhen Institute for Quantum Science and Engineering, Southern University of Science and Technology, Shenzhen, China
as Chief Research Scientist. 
In 2011, he received Information Theory Society Paper Award (2011) for ``Information-Spectrum Approach to Second-Order Coding Rate in Channel Coding''.
In 2016, he received the Japan Academy Medal from the Japan Academy
and the JSPS Prize from Japan Society for the Promotion of Science.

In 2006, he published the book ``Quantum Information: An Introduction'' from Springer, whose revised version was published as ``Quantum Information Theory: Mathematical Foundation'' from Graduate Texts in Physics, Springer in 2016.
In 2016, he published other two books ``Group Representation for Quantum Theory'' and ``A Group Theoretic Approach to Quantum Information'' from Springer.
He is on the Editorial Board of {\it International Journal of Quantum Information}
and {\it International Journal On Advances in Security}.
His research interests include classical and quantum information theory and classical and quantum statistical inference.
\end{IEEEbiographynophoto}

\end{document}